\begin{document}
\title{Gradient Flow of Energy: A General and Efficient Approach for Entity Alignment Decoding}
\titlerunning{A General and Efficient Approach for Entity Alignment Decoding}
%

\author{Yuanyi Wang\inst{1} \and
Haifeng Sun\inst{1} \and Lei Zhang\inst{2} \and Bo He\inst{1} \and Wei Tang\inst{3} \and Jingyu Wang\inst{1} \and Qi Qi\inst{1} \and Jianxin Liao\inst{1}}
\authorrunning{Yuanyi Wang et al.}
%
\institute{State Key Laboratory of Networking and Switching Technology, \\ Beijing University of Posts and Telecommunications, Bejing, China  \and
China Unicom, Beijing, China. \and
Huawei Translation Services Center.
\email{\{wangyuanyi,hfsun,hebo,wangjingyu,qiqi8266\}@bupt.edu.cn,} \\
\email{zhangl83@chinaunicom.cn}\quad
\email{tangwei133@huawei.com}}


%
\maketitle              
\begin{abstract}
Entity alignment (EA), a pivotal process in integrating multi-source Knowledge Graphs (KGs), seeks to identify equivalent entity pairs across these graphs. Most existing approaches regard EA as a graph representation learning task, concentrating on enhancing graph encoders. However, the decoding process in EA - essential for effective operation and alignment accuracy - has received limited attention and remains tailored to specific datasets and model architectures, necessitating both entity and additional explicit relation embeddings. This specificity limits its applicability, particularly in GNN-based models. To address this gap, we introduce a novel, generalized, and efficient decoding approach for EA, relying solely on entity embeddings. Our method optimizes the decoding process by minimizing Dirichlet energy, leading to the gradient flow within the graph, to maximize graph homophily. The discretization of the gradient flow produces a fast and scalable approach, termed \textbf{T}riple \textbf{F}eature \textbf{P}ropagation (TFP). TFP innovatively generalizes adjacency matrices to multi-views matrices: \textit{entity-to-entity}, \textit{entity-to-relation}, \textit{relation-to-entity}, and \textit{relation-to-triple}. The gradient flow through generalized matrices enables TFP to harness the multi-view structural information of KGs. Rigorous experimentation on diverse public datasets demonstrates that our approach significantly enhances various EA methods. Notably, the approach achieves these advancements with less than 6 seconds of additional computational time, establishing a new benchmark in efficiency and adaptability for future EA methods.

\keywords{Entity Alignment  \and Knowledge Graphs  \and Dirichlet Energy  \and Graph Homophily  \and Decoder \and Triple Feature Propagation.}
\end{abstract}
\section{Introduction}
Entity alignment (EA), essential for integrating Knowledge Graphs (KGs), tackles a key Semantic Web challenge—merging diverse graph data efficiently \cite{wang2022facing, dsouza2023iterative}. This process, aimed at identifying corresponding entities across distinct KGs, typically involves encoding and decoding phases (Fig.\ref{encoder_decoder} (1)). Current EA methods depend on seed alignments to train encoders for generating entity and, optionally, relation embeddings, which are then decoded to establish alignments.

Recently, EA is approached as a graph representation learning task, with emphasis on improving graph encoders, which fall into two main categories: translation-based models, such as TransE \cite{bordes2013translating} and its variants \cite{chen2017multilingual, guo2019learning}, and Graph Neural Networks (GNN)-based models, like GCN-Align \cite{wang2018cross}, MRAEA \cite{mao2020mraea}, and Dual-AMN \cite{mao2021boosting}. The former views relation embeddings as translation vectors between entities, while the latter aggregates neighboring embeddings to generate entity representations. Recent efforts have enhanced encoder training with additional information, yielding superior alignment accuracy in models like TEA-GNN \cite{xu2021time} and DualMatch \cite{liu2023unsupervised}, which utilize temporal data, and UMAEA \cite{chen2023rethinking} and DESAlign \cite{wang2024towards}, which incorporate images and entity names.

\begin{figure}[t]
\centering
\includegraphics[width = \linewidth]{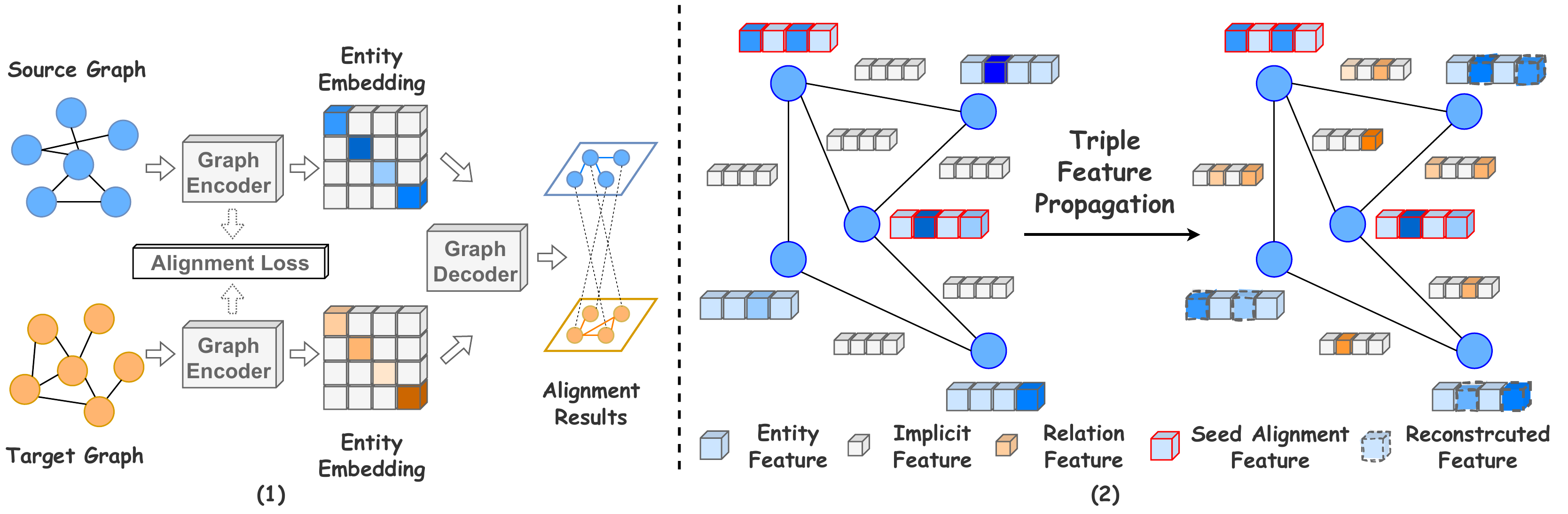}
\caption{(1) The architecture of existing EA methods that involve the encoding and decoding phase. (2) TFP maintains seed alignment features while reconstructing other entity features. It decodes relation features, not directly provided by encoders, to make them explicit based on entity features.} 
\label{encoder_decoder}
\vspace{-0.4cm}
\end{figure}

However, despite advancements in encoders, decoder research \cite{sun2020benchmarking} remains underdeveloped, often constrained by dataset and model specifics. Commonly used, the \textit{greedy search} decoder, is flawed by potentially aligning multiple entities to a single target, contradicting EA's one-to-one mapping principle. Alternative methods like CSLS \cite{lample2018word} and global alignment strategies \cite{xu2020coordinated, zhu2021raga} attempt to address these limitations using algorithms such as the Hungarian \cite{kuhn1955hungarian} or Sinkhorn \cite{cuturi2013sinkhorn}, yet without deeply leveraging KG structural properties. DATTI \cite{mao2022effective} progresses by employing Third-order Tensor Isomorphism for decoding, but its utility is restricted by the necessity for explicit relation embeddings, which many GNN-based encoders do not provide.

In light of these considerations, we propose that an efficient EA decoder should fulfill two key criteria: (\romannumeral1) the ability to exploit the structural information inherent in KGs, and (\romannumeral2) the capacity to generalize across various types of graph encoders. Regarding the first criterion, it is essential to consider both the distribution characteristics of the graph representations generated by the encoders and the intrinsic structural information of the KGs during the decoding phase. For the second criterion, the approach should be adaptable to both translation-based and GNN-based encoder architectures. A notable distinction lies in the fact that translation-based encoders typically produce explicit relation embeddings along with entity embeddings, whereas most GNN-based encoders primarily generate entity embeddings. Despite this, some GNN-based encoders \cite{mao2021boosting} do extract explicit relation embeddings based on entity embedding, underscoring that entity embeddings are central to the graph representation.

In this work, we propose a novel EA decoding approach, Triple Feature Propagation (TFP), that innovatively employs the structural information of KG to reconstruct the entity embeddings by maximizing homophily through minimizing the Dirichlet energy of the entity embedding. Unlike existing methods \cite{fey2019deep, heimann2021refining}, TFP utilizes the gradient flow of Dirichlet energy for efficient information propagation across various graph encoder types, avoiding the need for supplementary information. This approach is a multi-view propagation-based entity embedding reconstruction step, i.e., the decoding process, subsequent to the graph encoding phase. Specifically, TFP generalizes the traditional adjacency matrix to multi-view matrices—\textit{entity-to-entity}, \textit{entity-to-relation}, \textit{relation-to-entity}, and \textit{relation-to-triple}—to depict a comprehensive representation of KG structures. It combines these matrices and minimizes Dirichlet energy to initiate a gradient flow within the graph, facilitating effective information propagation. The propagation reconstructs the entity feature and generates the explicit relation feature (in Fig.\ref{encoder_decoder}(2)). TFP establishes a fast, scalable, and theoretical decoding solution.

Our evaluation of TFP across both translation- and GNN-based encoders, incorporating six advanced EA methods, demonstrates performance gains across widely recognized public datasets. Remarkably, TFP enhances even state-of-the-art (SOTA) methods with minimal additional computational time, less than 6 seconds, setting a new benchmark for efficiency and adaptability in EA strategies. 

Our contributions are multifaceted: 

\noindent
(1) To comprehensively represent the KG structure, we generalize adjacency matrices through multi-view matrices for entity-to-entity, entity-to-relation, relation-to-entity, and relation-to-triple relationships. 

\noindent
(2) We design TFP, a fast, general, and efficient decoding approach that combines multi-view matrices to reconstruct the entity embeddings to maximize homophily by minimizing the Dirichlet energy. 

\noindent
(3) We offer a theoretical foundation for TFP through gradient flow theory. TFP is naturally generated through gradient flow of minimizing the Dirichlet energy. 

\noindent
(4) Extensive experiments are conducted to demonstrate TFP can improve upon current EA methods only rely on entity embeddings with minimal computational cost, typically less than 6 seconds.





\section{Related Work}
\label{relatedworks}
\textbf{Entity Alignment Encoders.} EA is predominantly viewed as a graph representation learning endeavor, with encoders designed to intricately model KG structures \cite{zeng2021comprehensive}. Encoders fall into two main categories: translation-based, exemplified by TransE \cite{bordes2013translating} and its variants \cite{chen2017multilingual, guo2019learning,pei2019semi}, which emphasize embedding learning strategy; and GNN-based, which leverage a variety of GNN architectures to produce entity embeddings. These include simple GCNs \cite{wang2018cross}, multi-hop and relational GCNs \cite{sun2020knowledge,yu2020generalized}, graph attention networks \cite{zhu2020collective,mao2021boosting,sun2022revisiting}, and self-supervised GCNs \cite{li2022uncertainty}. Innovations extend to semi-supervised \cite{li2022uncertainty,mao2020mraea} and active learning approaches \cite{zeng2021reinforced,liu2021activeea,berrendorf2021active}, plus the integration of additional information like entity attributes or temporal data \cite{sun2017cross,trisedya2019entity,wu2019relation,xu2021time,xu2022time,liu2023unsupervised}. Our TFP decoding strategy aims to reconstruct entity representation via feature propagation, deriving from Dirichlet energy's gradient flow to maximize homophily. TFP, adaptable across graph encoder types, enhances these encoders, including advanced models.

\noindent
\textbf{Entity Alignment Decoders.} EA decoders traditionally aim at pairing entities from different KGs using learned embeddings. The \textit{greedy search} algorithm \cite{ye2019vectorized,shi2019modeling}, though widely used, can produce multiple mappings for a single entity, conflicting with EA's one-to-one mapping principle. Alternative methods like Cross-Domain Similarity Local Scaling (CSLS) \cite{lample2018word} and the deferred acceptance algorithm \cite{roth2008deferred} have sought to refine alignment accuracy. More recent efforts \cite{xu2020coordinated, zhu2021raga} employ global alignment strategies, using algorithms such as the Hungarian \cite{kuhn1955hungarian} or Sinkhorn \cite{cuturi2013sinkhorn}, yet often overlook KGs' distinctive structural properties. DATTI \cite{mao2022effective} advances decoding by utilizing Third-order Tensor Isomorphism to capture structural information but is limited by its requirement for explicit relation embeddings, absent in many GNN-based models. TFP distinguishes itself by focusing on entity embeddings, fundamental to all encoder types, and establishes a fast and scalable decoding solution requiring only about 6 seconds, applicable across a wide range of encoder architectures.

\section{Preliminary}
\label{Preliminary}
Knowledge graph (KG), $\mathcal{G} = (\mathcal{E}, \mathcal{R}, \mathcal{T})$, stores the real-world knowledge in the form of $\mathcal{T}$, given a set of entities $\mathcal{E}$, a set of relations $\mathcal{R}$, and a set of triples $\mathcal{T} = \{ (h, r, t),| h, t \in \mathcal{E}, r \in \mathcal{R} \}$, where $h, t$ denote the head entity and the tail entity, $r$ denotes the relation.

\noindent
\newtheorem{myDef}{\noindent\bf Definition}
\begin{myDef}
\textbf{Entity Alignment (EA)} aims to discover a one-to-one mapping $\Phi = \{ (e_s, e_t) ,|, e_s \in \mathcal{E}_s, e_t \in \mathcal{E}_t, e_s \equiv e_t \}$ between entities from a source KG $\mathcal{G}_s = (\mathcal{E}_s, \mathcal{R}_s, \mathcal{T}_s)$ to a target KG $\mathcal{G}_t = (\mathcal{E}_t, \mathcal{R}_t, \mathcal{T}_t)$, where with $\equiv$ signifying an equivalence relation between entities $e_s$ and $e_t$.
\end{myDef}

For a simple directed graph $G = (V, E)$ with labels $y = \{{y_i}|{i\in V}\}$, where $V$ represents the set of vertices and $E$ the set of edges, the concept of graph homophily is crucial. Homophily indicates the likelihood of connected nodes sharing the same label. Formally, we define the homophily of a graph $G$ as:
\begin{equation}
\mathscr{H}(G) = \mathbb{E}_{i\in V}\left[\frac{|\{j\in \mathcal{N}_i^{in} : y_i = y_j \}|}{|\mathcal{N}_i^{in}|}\right]
\end{equation}
Here, $|\{j\in \mathcal{N}_i^{in} : y_i = y_j \}|$ quantifies the number of neighbors of node $i$ in $V$ sharing the same label $y_i$  \cite{pei2019geom}. A graph $G$ is considered homophilic if $\mathscr{H}(G) \approx 1$ and heterophilic if $\mathscr{H}(G) \approx 0$.

\begin{myDef}
    \textbf{(Laplacian matrix.)} We define the the adjacency matrix of $G$ as  $\mathbf{A}$, the symmetrically normalized adjacency is represented by $\mathbf{\widetilde{A}} = \mathbf{D}^{-\frac{1}{2}} \mathbf{A} \mathbf{D}^{-\frac{1}{2}}$ , and the symmetrically normalized Laplacian matrix as $\mathbf{\mathbf{\Delta}} = \mathbf{I} - \mathbf{\widetilde{A}}$.
\end{myDef}

The concept of Dirichlet energy is often employed to measure the homophily in graph embeddings. We define it as follows:
\begin{myDef}
\textbf{(Dirichlet Energy.)} Given the graph node embedding $\mathbf{X}\in \mathbb{R}^{N\times d}$, the graph homophily can be measured by the Dirichlet energy of $\mathbf{X}$:
    \begin{equation}
    \begin{aligned}
        \mathscr{L}(\mathbf{X}) = tr(\mathbf{X}^\top \mathbf{\Delta} \mathbf{X}) = \frac{1}{2}\sum_{i,j=1}^Na_{i,j}||\frac{\mathbf{X}_i}{\sqrt{\mathbf{D}_{i,i}+1}} - \frac{\mathbf{X}_j}{\sqrt{\mathbf{D}_{j,j}+1}}||^2_2
    \end{aligned}
    \end{equation}
\end{myDef}

\section{Methodology}
\label{method}
In this section, we introduce our proposed decoding approach, Triple Feature Propagation (TFP). Since the symmetrically Laplacian matrix $\mathbf{\Delta}$ is derived from the undirected adjacency, which captures the entity-to-entity structure, we initially focus on gradient flow within this special situation before extending the approach to the generalized adjacency matrix. Before the decoding process, we need to train an encoder to get the entity embedding as the initial feature $\mathbf{X}^{(0)} = Enocder (\mathcal{G})\in \mathbb{R}^{|\mathcal{E}|\times d}$. The process of TFP is shown in Fig.\ref{structure} and Alg. \ref{algorithm1}.

\subsection{Gradient Flow}
Given seed alignment entity features $x_s$, our objective is to reconstruct the features $x_o$ of other entities by minimizing the Dirichlet energy $\mathscr{L}(\mathbf{X})$. We designate the set of seed alignment entities as $\mathcal{E}_s \subseteq \mathcal{E}$, with the remaining entities denoted by $\mathcal{E}_o = \mathcal{E} \backslash \mathcal{E}_s$. The entities are ordered such that:
\begin{equation}
\label{adjacency}
\mathbf{X} = 
\begin{pmatrix}
    \mathbf{x}_s \\
    \mathbf{x}_{o}
\end{pmatrix},
\quad
\mathbf{\Delta} = 
\begin{pmatrix}
    \mathbf{\Delta}_{ss} & \mathbf{\Delta}_{so} \\
    \mathbf{\Delta}_{os} & \mathbf{\Delta}_{oo}
\end{pmatrix}
\end{equation}
The gradient flow of Dirichlet energy results in the graph heat equation  \cite{chung1996spectral}, with seed alignment features remaining stationary. This is expressed as:
\begin{equation}
\label{heatequation}
    \frac{d\mathbf{X}(t)}{dt} = -\nabla_\mathbf{x} \mathscr{L}(\mathbf{X}(t)) = - \mathbf{\Delta} \mathbf{X}(t)
\end{equation}
By integrating the boundary condition $\mathbf{x_s}(t) = \mathbf{x_s}$, the solution to this heat equation provides the required decoding process. The gradient flow for seed alignments is $\mathbf{0}$, leading to the compact expression:
\begin{equation}
\label{heat equation}
\begin{aligned}
\frac{d}{dt}
\begin{pmatrix}
    \mathbf{x}_s(t) \\
    \mathbf{x}_{o}(t)
\end{pmatrix}
= - 
\begin{pmatrix}
    \mathbf{0} & \mathbf{0}\\
        \mathbf{\Delta}_{os} & \mathbf{\Delta}_{oo}\\
\end{pmatrix}
\begin{pmatrix}
    {\mathbf{x}}_s \\
        {\mathbf{x}}_{o}(t)\\
\end{pmatrix}
= 
\begin{pmatrix}
    \mathbf{0} \\
    -\mathbf{\Delta}_{os} \mathbf{x}_s - \mathbf{\Delta}_{oo} \mathbf{x}_{o}(t)
\end{pmatrix}
\end{aligned}
\end{equation}
Given the positive semi-definite nature of the graph Laplacian matrix, Dirichlet energy $\mathscr{L}$ is convex, and its global minimizer is the solution to the gradient equation $\nabla \mathscr{L}(\mathbf{X}(t))=0$. The solution in equation (\ref{heat equation}) can be expressed as:
\begin{equation}
\label{solution}
    \frac{d\mathbf{x}_o(t)}{dt} = -\mathbf{\Delta}_{os}\mathbf{x}_s - \mathbf{\Delta}_{oo}\mathbf{x}_{o}(t) = \mathbf{0}
\end{equation}
Therefore, we present the following proposition:
\theoremstyle{plain}
\newtheorem{myTh}{\bf Proposition}
\begin{myTh}
\label{existence}
\textbf{(Existence of the solution.)} The matrix $\Delta_{oo}$ is non-singular, allowing the reconstruction of other entity features $\mathbf{x}_o$ using seed alignment entity features $\mathbf{x}_s$ as ${\mathbf{x}}_{o}(t) = -\mathbf{\Delta}^{-1}_{oo}\mathbf{\Delta}_{os}{\mathbf{x}}_s$.
\end{myTh}
\begin{proof}
Please refer to Appendix A.
\end{proof}

However, solving this system of linear equations is computationally intensive, with a complexity of $O(|\mathcal{E}_o|^3)$ for matrix inversion, rendering it impractical for large graphs. Therefore, we discretize the process to tackle this challenge.

\noindent
\textbf{Remark.} This special gradient flow in TFP is generalized to directed graphs, distinguishing it from FP \cite{rossi2022unreasonable} on undirected graphs.

\subsection{Discretization Strategy}
To tackle the computational challenges, we discretize the heat equation (\ref{heat equation}) and adopt an iterative approach for its resolution. By approximating the temporal derivative as a forward difference and discretizing time $t$ with fixed steps ($t = hk$ for step size $h > 0$ and $k = 1, 2, \dots$), we employ the implicit Euler scheme:
\begin{equation}
    \mathbf{X}^{(k+1)} = \mathbf{X}^{(k)} + h \frac{d\mathbf{X}(t)}{dt} \mathbf{X}^{(k)}
\end{equation}
Incorporating the heat equation (\ref{heatequation}), the explicit Euler scheme is subsequently defined as:
\begin{equation}
\mathbf{X}^{(k+1)}
= \begin{pmatrix}
    \mathbf{I} & \mathbf{0} \\
        - h\mathbf{\Delta}_{os} & \mathbf{I} - h\mathbf{\Delta}_{oo}
\end{pmatrix}
    \mathbf{X}^{(k)}
\end{equation}
When focusing on the special case of $h = 1$, we can use the following observation to rewrite the iteration formula.
\begin{equation}
\label{iterformula}
    \mathbf{\widetilde{A}} = \mathbf{I} - \mathbf{\mathbf{\Delta}} = 
    \begin{pmatrix}
        \mathbf{I} - \mathbf{\Delta}_{ss} & - \mathbf{\Delta}_{so}\\
        - \mathbf{\Delta}_{os} & \mathbf{I} - \mathbf{\Delta}_{oo}\\
    \end{pmatrix}
\end{equation}
\begin{equation}
\label{itersolution}
    \mathbf{X}^{(k+1)} = 
    \begin{pmatrix}
        \mathbf{I} & \mathbf{0}\\
        \mathbf{\widetilde{A}}_{os} & \mathbf{\widetilde{A}}_{oo}\\
    \end{pmatrix}
    \mathbf{X}^{(k)}
\end{equation}
This Euler scheme serves as a gradient descent for the Dirichlet energy, reducing the energy and smoothing the features. The approximation of this solution in this case is formalized as follows:
\begin{myTh}
\textbf{(Approximation of the solution.)} 
With the iterative reconstruction solution as delineated in equation (\ref{itersolution}), and considering a sufficiently large iteration count $N$, the entity features will approximate the results of feature propagation as follows:
\begin{equation}
    \mathbf{X}^{(N)} \approx 
    \begin{pmatrix}
        \mathbf{x}_{s} \\ \mathbf{\Delta}^{-1}_{oo}\mathbf{\widetilde{A}}_{os}{\mathbf{x}}_s\\
    \end{pmatrix}
\end{equation}
\end{myTh}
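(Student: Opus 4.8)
The plan is to recognize the update (\ref{itersolution}) as a block-triangular linear iteration, observe that the seed block is a fixed subspace, and then analyze the resulting affine fixed-point iteration on the remaining block. First I would read off from the iteration matrix that $\mathbf{x}_s^{(k+1)} = \mathbf{x}_s^{(k)}$, so the seed features stay clamped at $\mathbf{x}_s$ for every $k$; this already accounts for the first block of the claimed limit. Substituting this into the second block gives the affine recurrence $\mathbf{x}_o^{(k+1)} = \mathbf{\widetilde{A}}_{oo}\mathbf{x}_o^{(k)} + \mathbf{\widetilde{A}}_{os}\mathbf{x}_s$, whose candidate fixed point $\mathbf{x}_o^\star$ solves $(\mathbf{I} - \mathbf{\widetilde{A}}_{oo})\mathbf{x}_o^\star = \mathbf{\widetilde{A}}_{os}\mathbf{x}_s$. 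Since $\mathbf{\Delta} = \mathbf{I} - \mathbf{\widetilde{A}}$ forces $\mathbf{\Delta}_{oo} = \mathbf{I} - \mathbf{\widetilde{A}}_{oo}$, and Proposition \ref{existence} guarantees $\mathbf{\Delta}_{oo}$ is invertible, the fixed point is exactly $\mathbf{x}_o^\star = \mathbf{\Delta}_{oo}^{-1}\mathbf{\widetilde{A}}_{os}\mathbf{x}_s$, matching the target.

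To turn ``fixed point'' into ``limit,'' I would subtract the fixed-point identity from the recurrence to obtain $\mathbf{x}_o^{(k)} - \mathbf{x}_o^\star = \mathbf{\widetilde{A}}_{oo}^{\,k}\,(\mathbf{x}_o^{(0)} - \mathbf{x}_o^\star)$, so that convergence as $N \to \infty$ reduces entirely to showing that the powers $\mathbf{\widetilde{A}}_{oo}^{\,k}$ decay, i.e.\ that the spectral radius satisfies $\rho(\mathbf{\widetilde{A}}_{oo}) < 1$. This spectral bound is the crux of the argument and the step I expect to be the main obstacle: a priori a principal submatrix of $\mathbf{\widetilde{A}}$ only inherits eigenvalues in $[-1,1]$ via Cauchy interlacing, and one must rule out both endpoints $\pm 1$.

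I would close this gap with two structural facts. Since $\mathbf{A}$ is a nonnegative adjacency matrix and $\mathbf{D}^{-1/2}$ is a positive diagonal, $\mathbf{\widetilde{A}}$—and hence its principal submatrix $\mathbf{\widetilde{A}}_{oo}$—is symmetric and nonnegative; by Perron--Frobenius its spectral radius coincides with its largest eigenvalue, $\rho(\mathbf{\widetilde{A}}_{oo}) = \lambda_{\max}(\mathbf{\widetilde{A}}_{oo})$, so the endpoint $-1$ is excluded automatically once the top end is controlled. For the top end I would use that $\mathbf{\Delta}$ is positive semidefinite (it is the normalized Laplacian, and $\mathscr{L}(\mathbf{X}) = tr(\mathbf{X}^\top\mathbf{\Delta}\mathbf{X}) \ge 0$), so its principal submatrix $\mathbf{\Delta}_{oo}$ is positive semidefinite; combined with the nonsingularity from Proposition \ref{existence}, $\mathbf{\Delta}_{oo}$ is in fact positive definite. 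Then $\lambda_{\min}(\mathbf{\Delta}_{oo}) = 1 - \lambda_{\max}(\mathbf{\widetilde{A}}_{oo}) > 0$ yields $\lambda_{\max}(\mathbf{\widetilde{A}}_{oo}) < 1$, and therefore $\rho(\mathbf{\widetilde{A}}_{oo}) < 1$.

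With the spectral radius strictly below one, $\mathbf{\widetilde{A}}_{oo}^{\,k} \to \mathbf{0}$ geometrically, so $\mathbf{x}_o^{(N)} \to \mathbf{\Delta}_{oo}^{-1}\mathbf{\widetilde{A}}_{os}\mathbf{x}_s$ for any initialization, and the symbol $\approx$ records truncation at a finite but sufficiently large $N$; a brief remark on the rate $\rho(\mathbf{\widetilde{A}}_{oo})^{N}$ would quantify how large $N$ must be. I would also flag the mild standing assumption that every connected component of the entity graph contains at least one seed entity, since this is precisely what makes $\mathbf{\Delta}_{oo}$ nonsingular in Proposition \ref{existence} and hence what the entire convergence argument rests upon.
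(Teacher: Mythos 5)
Your proposal is correct, and its skeleton matches the paper's proof: both clamp the seed block, reduce to the affine recurrence $\mathbf{x}_o^{(k+1)} = \mathbf{\widetilde{A}}_{oo}\mathbf{x}_o^{(k)} + \mathbf{\widetilde{A}}_{os}\mathbf{x}_s$, and identify the limit $\mathbf{\Delta}_{oo}^{-1}\mathbf{\widetilde{A}}_{os}\mathbf{x}_s$ using the nonsingularity of $\mathbf{\Delta}_{oo} = \mathbf{I}-\mathbf{\widetilde{A}}_{oo}$ from Proposition~\ref{existence}. Where you genuinely diverge is in how the limit is extracted and, more importantly, in how the spectral bound $\rho(\mathbf{\widetilde{A}}_{oo})<1$ is established. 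The paper unrolls the recurrence into a geometric series and invokes the Neumann identity $\sum_{i\ge 1}\mathbf{\widetilde{A}}_{oo}^{\,i-1} = (\mathbf{I}-\mathbf{\widetilde{A}}_{oo})^{-1}=\mathbf{\Delta}_{oo}^{-1}$, justifying convergence by asserting that the spectrum of $\mathbf{\Delta}$ lies in $[0,2)$, hence that of $\mathbf{\widetilde{A}}_{oo}$ in $(-1,1]$, with nonsingularity then excluding $1$. You instead pass to the fixed point and the error dynamics $\mathbf{x}_o^{(k)}-\mathbf{x}_o^\star = \mathbf{\widetilde{A}}_{oo}^{\,k}(\mathbf{x}_o^{(0)}-\mathbf{x}_o^\star)$, and prove the radius bound by combining positive semidefiniteness of the principal submatrix $\mathbf{\Delta}_{oo}$ (plus nonsingularity, giving positive definiteness and hence $\lambda_{\max}(\mathbf{\widetilde{A}}_{oo})<1$) with Perron--Frobenius nonnegativity (giving $\rho(\mathbf{\widetilde{A}}_{oo})=\lambda_{\max}(\mathbf{\widetilde{A}}_{oo})$, which rules out $-1$). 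Your route is the more rigorous one at the delicate endpoint: the paper's claim that the normalized Laplacian spectrum is confined to $[0,2)$ fails in general (bipartite components attain $2$), and its exclusion of $-1$ from the spectrum of $\mathbf{\widetilde{A}}_{oo}$ rests on that claim, whereas your Perron--Frobenius argument sidesteps the issue entirely. Your closing remarks --- the geometric rate $\rho(\mathbf{\widetilde{A}}_{oo})^N$ quantifying ``sufficiently large $N$,'' and the standing assumption that every connected component contains a seed (which is what Proposition~\ref{existence} really needs) --- are both absent from the paper and worth keeping.
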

\begin{proof}
Please refer to Appendix B.
\end{proof}
The update process in Equation (\ref{itersolution}) equates to initially multiplying the entity features $\mathbf{X}$ by the matrix $\mathbf{\widetilde{A}}$, followed by resetting the seed alignment features to their original values. This results in a highly scalable and straightforward strategy for reconstructing other entity features:
\begin{equation}
    \mathbf{X}^{(k+1)} \leftarrow \mathbf{\widetilde{A}}\mathbf{X}^{(k)};\quad \mathbf{x}^{(k+1)}_s \leftarrow \mathbf{x}^{(k)}_s
\end{equation}

\subsection{Generalized Adjacency Matrices}
\label{Generalized Adjacency Matrices}

\begin{figure*}[t!]
\centering
\includegraphics[width = \textwidth]{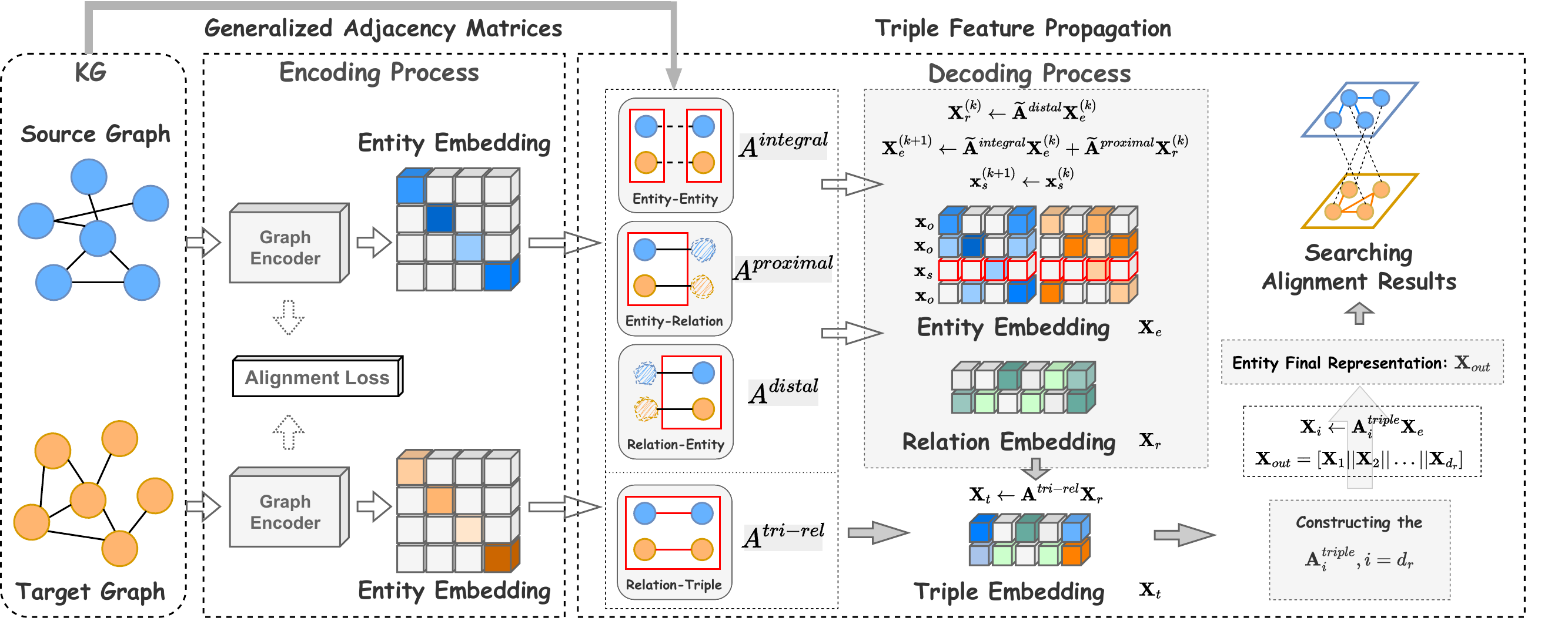}
\caption{The illustration of Triple Feature Propagation.} 
\label{structure}
\end{figure*}

Our discussion has thus far established that the gradient flow of Dirichlet energy facilitates feature reconstruction within basic graph structures. However, KGs extend beyond the realm of simple directed graphs represented by an adjacency matrix $\mathbf{A}$. KGs encompass a richer structure, with triples $\mathcal{T} = \{ (h, r, t) ,| h, t \in \mathcal{E}, r \in \mathcal{R} \}$, highlighting entity-to-entity $(h-t)$, entity-to-relation $(h-r)$, relation-to-entity $(r-t)$, and relation-to-triple $(r-\mathcal{T})$. These perspectives encapsulate key meta-structures in KGs:
\begin{enumerate}
    \item \textbf{Entity-to-Relation.} 
    Beyond the conventional adjacency matrix, the $(h, r)$ pairs in triples provide a unique structural perspective, indicating directional links from head entity $h$ to relation $r$. This directionality is fundamental and unidirectional. For instance, in the triple $(London, CapitalOf, England)$, the entity $[England]$ is connected to $[London]$ via the relation $[CapitalOf]$, but not in reverse. Similar to the adjacency matrix $\mathbf{A}$, we define an $entity\ to\ relation$ matrix $\mathbf{A}^{proximal}\in \mathbb{R}^{|\mathcal{E}|\times |\mathcal{R}|}$ based on the $(h, r)$ pairs as:
    \begin{equation}
    \label{entitytorelation}
    \begin{aligned}
        \exists e \in \mathcal{E}: \mathbf{A}^{proximal}_{i,j} &= 1,\ (h_i,r_j,e)\in \mathcal{T} \quad \mathbf{A}^{proximal}_{i,j} &= 0,\ (h_i,r_j,e)\notin \mathcal{T}
    \end{aligned}
    \end{equation}

    \item \textbf{Relation-to-Entity.} This category, akin to \textit{entity to relation}, encompasses directed and irreversible information. It underscores the more distant positional relationships between relations and tail entities. The resulting structural matrix, $\mathbf{A}^{distal}\in \mathbb{R}^{|\mathcal{R}|\times |\mathcal{E}|}$, is defined as:
    \begin{equation}
    \begin{aligned}
        \exists e \in \mathcal{E}: \mathbf{A}^{distal}_{i,j} &= 1,\ (e, r_i,t_j)\in \mathcal{T} \quad \quad \mathbf{A}^{distal}_{i,j} &= 0,\ (e, r_i,t_j)\notin \mathcal{T}
    \end{aligned}
    \end{equation}

    \item \textbf{Entity-to-Entity.} Representing the fundamental structural category, this is similar to the adjacency matrix $\mathbf{A}$. While $\mathbf{A}$ provides a solid foundation and is rich in spectral properties, it is insufficient to fully represent KG structures. For example, the triples $(London, CapitalOf, England)$ are structurally distinct with $(London, LocateIn, England)$ in KGs, yet a traditional adjacency matrix treats them identically as they both represent $London$ connects $English$, representing $\mathbf{A}_{i,j}=1$. To better represent relationships between entities, we extend the adjacency matrix to $\mathbf{A}^{integral}\in \mathbb{R}^{|\mathcal{E}|\times |\mathcal{E}|}$:
    \begin{equation}
    \begin{aligned}
        \exists r\in \mathcal{R}: \quad \quad \mathbf{A}^{integral}_{i,i} &= |\mathcal{T}_{e_i}| \\ \mathbf{A}^{integral}_{i,j} = |\mathcal{T}_{(h_i,t_j)}|, (h_i,r,t_j) \in \mathcal{T} \quad \quad&\mathbf{A}^{integral}_{i,j} = 0, (h_i,r,t_j) \notin \mathcal{T}
    \end{aligned}
    \end{equation}
    $|\mathcal{T}_{e_i}|$ denotes the number of triples involving entity $e_i$, and $|\mathcal{T}_{(h_i,t_j)}|$ indicates the count of triples with the pair $(h_i,t_j)$.

    \item \textbf{Relation-to-Triple.} As discussed before, the knowledge in KG is stored in the triple $\mathcal{T} = \{ (h, r, t),| h, t \in \mathcal{E}, r \in \mathcal{R} \}$, which is the core representation of the KG structure. The unique role of relation $r$ within triple $\mathcal{T}$ motivates the extension of the adjacency matrix to the triple level through relation $r$. We introduce the adjacency matrix $\mathbf{A}^{tri-rel}\in \mathbb{R}^{|\mathcal{T}|\times |\mathcal{R}|}$, defined as:
    \begin{equation}
    \begin{aligned}
        \mathbf{A}^{tri-rel}_{i,j} &= 1,\ r_j\in \mathcal{T}_i \quad \quad 
        \mathbf{A}^{tri-rel}_{i,j} &= 0,\ r_j\notin \mathcal{T}_i
    \end{aligned}
    \end{equation}

    \textbf{Note.} The \textit{entity-to-triple} matrix is also considered. This matrix is considered to represent \textit{entity-entity-triple} relationships for head $(h)$ and tail $(t)$ entities connections in $\mathcal{T}$. However, we find that these connections are already encapsulated within the \textit{entity-to-entity} matrix $\mathbf{A}^{integral}$. This matrix, $\mathbf{A}^{integral}$, which details connections between head $h$ and tail $t$ entities, effectively doubles as an \textit{entity-to-triple} representation.
\end{enumerate}
Having established that the gradient flow of Dirichlet energy facilitates reconstruction based on common adjacency relationships as outlined in equation (\ref{itersolution}), we extend this concept to the specialized adjacency matrices of KGs. We assume that the normalized form of $\mathbf{A}^{proximal}$, $\mathbf{A}^{distal}$, and $\mathbf{A}^{integral}$, which are similar to graph Laplacian, representing different facets of KG structure, possess analogous spectral properties. The experimental results confirm our view. Consequently, the gradient flow should yield comparable solutions across these matrices. We denote their normalized form $\mathbf{\widetilde{A}}^{k,k=\{proximal,distal,integral\}}$ 
\begin{equation}
\mathbf{\widetilde{A}}^{k}_i=\frac{\mathbf{A}^{k}_i}{\sum_{j=0}^{number of column}\mathbf{A}^{k}_{i,j}}
\end{equation}

\subsection{Triple Feature Propagation}
\label{SectionTFP}

Utilizing the gradient flow of Dirichlet energy on these categories of KG structure, we derive a natural and straightforward generalized propagation strategy. The propagation process is articulated through the following equations:
\begin{equation}
\label{relationequation}
\mathbf{X}_r^{(k)}=\mathbf{\widetilde{A}}^{distal}\mathbf{X}_e^{(k)}
\end{equation}
\begin{equation}
\label{entityequation}
\mathbf{X}_e^{(k+1)}= \mathbf{\widetilde{A}}^{integral}\mathbf{X}_e^{(k)} + \mathbf{\widetilde{A}}^{proximal}\mathbf{X}_r^{(k)}
\end{equation}
\begin{equation}
    \mathbf{x}^{(k+1)}_s =\mathbf{x}^{(k)}_s
\end{equation}
Here, $\mathbf{X}^{(i)}_e, 1\leq i\leq k+1,$ denotes the entity features, initially derived from the graph encoder as $\mathbf{X}_e^{(0)}$. $\mathbf{X}_r$ represents the implicit relation features generated through the propagation strategy. To comprehensively capture the evolving entity and relation information across iterations, we aggregate entity features from each step through concatenation, resulting in the entity feature as follows:
\begin{equation}
    \mathbf{X}_e = [\mathbf{X}^{(0)}||\mathbf{X}^{(1)}||\dots||\mathbf{X}^{(k)}]
\end{equation}

We notice that our process explicitly captures the implicit relation feature $\mathbf{X}_r$, serving as an essential intermediary. By leveraging the \textit{triple-to-relation} matrix, we further refine $\mathbf{X}_r$. We use a hyper-sphere independent random projection to reduce the dimension of $\mathbf{X}_r$ to $d_r$, following the method outlined in \cite{mao2022lightea}. This step enables the creation of the triple feature $\mathbf{X}_t$.
\begin{equation}
    \mathbf{X}_r = ranp(\mathbf{X}_r)\in \mathbb{R}^{|\mathcal{R}|\times d_r}
\end{equation}
\begin{equation}
\label{tripleequation}
    \mathbf{X}_t = \mathbf{{A}}^{tri-rel}\mathbf{X}_r
\end{equation}
Our method enables the representation of the KG structure as a three-dimensional tensor $\mathbf{A}^{triple}\in \mathbb{R}^{|\mathcal{E}|\times |\mathcal{E}|\times d_r}$, rather than the traditional adjacency matrix. The tensor is defined as:
\begin{equation}
\begin{aligned}
    \mathbf{A}^{triple}_{h,t} & = \mathbf{X}_t(h,r,t), (h,t)\in \mathcal{T} \quad \quad \mathbf{A}^{triple}_{h,t} & = \mathbf{0}, (h,t)\notin \mathcal{T}
\end{aligned}
\end{equation}
Here, $\mathbf{X}_t(h,r,t)$ represents the feature of triple $(h,r,t)$ in $\mathbf{X}_t$. The KG structure is encapsulated in $d_r$ slices as $\mathbf{A}^{triple}_1,\dots,\mathbf{A}^{triple}_{d_r} \in \mathbb{R}^{|\mathcal{E}|\times |\mathcal{E}|}$. Parallel to the relation feature, the entity feature $\mathbf{X}_e$ is also scaled through hyper-sphere independent random projection to $d_e$ dimension. The propagation process is based on the final entity features $\mathbf{X}_e$, allowing for the continuation of gradient flow from the triple perspective. Since each $\mathbf{A}^{triple}_i$ represents a segmented view of the overall $\mathbf{A}^{triple}$, their concatenation is essential to compile the comprehensive final feature.
\begin{equation}
\label{fianlequation}
    \mathbf{X}_e = ranp(\mathbf{X}_e)\in \mathbb{R}^{|\mathcal{R}|\times d_e},\quad \mathbf{X}_i = \mathbf{A}^{triple}_i\mathbf{X}_e, i=1,\dots,d_r
\end{equation}
\begin{equation}
\label{concatenate}
    \mathbf{X}_{out} = [\mathbf{X}_1||\mathbf{X}_2||\dots||\mathbf{X}_{d_r}]
\end{equation}
TFP's implementation is outlined in Algorithm 1. To identify alignment results, we adopt an approach from \cite{mao2022lightea,mao2022effective} that frames the search as an assignment problem, moving beyond traditional Euclidean or cosine similarity calculations. This method upholds the one-to-one matching constraint and facilitates the use of the Sinkhorn operator \cite{cuturi2013sinkhorn} for faster computation. Detailed procedures are described in Appendix C.

\noindent
\begin{minipage}{.49\textwidth}
\centering
\vspace{-0.7cm}
\begin{algorithm}[H]
\scriptsize
\caption{\small TFP}
\label{algorithm1}
\renewcommand{\algorithmicrequire}{\textbf{Input:}}
\renewcommand{\algorithmicensure}{\textbf{Output:}}
\setlength{\algorithmicindent}{0em} 
\begin{algorithmic}[1]        
\REQUIRE The entity embedding $\mathbf{X}^{(0)}$, the triples $\mathcal{T}$, iteration number $K$, dimension $d_r,d_e$.
\ENSURE The reconstructed entity feature  $\mathbf{X}_{out}$.  
\STATE Initialize $\mathbf{\widetilde{A}}^{proximal}$, $\mathbf{\widetilde{A}}^{distal}$, $\mathbf{\widetilde{A}}^{integral}$, and $\mathbf{A}^{tri-rel}$ through the triples $\mathcal{T}$.
\FOR{$k=1\rightarrow K$}
\STATE $\mathbf{X}_r^{(k)} \leftarrow \mathbf{\widetilde{A}}^{distal}\mathbf{X}_e^{(k)}$
\STATE $\mathbf{X}_e^{(k+1)} \leftarrow \mathbf{\widetilde{A}}^{integral}\mathbf{X}_e^{(k)} + \mathbf{\widetilde{A}}^{proximal}\mathbf{X}_r^{(k)}$
\STATE $\mathbf{x}^{(k+1)}_s \leftarrow \mathbf{x}^{(k)}_s$
\ENDFOR
\STATE $\mathbf{X}_r \leftarrow random\_projection(\mathbf{X}_r^{(K)}, d_r)$
\STATE $\mathbf{X}_t \leftarrow \mathbf{{A}}^{tri-rel}\mathbf{X}_r$
\STATE Generate $\mathbf{A}^{triple}_i$ through $\mathbf{X}_t$.
\STATE $\mathbf{X}_e \leftarrow random\_projection(\mathbf{X}_e^{(K)}, d_e)$
\STATE $\mathbf{X}_i \leftarrow \mathbf{A}^{triple}_i\mathbf{X}_e^{(K)}$
\STATE $\mathbf{X}_{out} \leftarrow [\mathbf{X}_1||\mathbf{X}_2||\dots||\mathbf{X}_{d_r}]$
\STATE \textbf{return} $\mathbf{X}_{out}$
\end{algorithmic}
\end{algorithm}
\end{minipage}
\hfill
\begin{minipage}{.49\textwidth}
\centering
\small
\captionof{table}{\small Statistics of the DBP15K and SRPRS datasets, highlighting SRPRS as sparse KGs typical of real-world scenarios.}
\label{datasets}
\centering
\resizebox{\textwidth}{!}{
\begin{tabular}{cc|ccc}
\toprule
\multicolumn{2}{c|}{Datasets}                & Entity & Relation & Triple \\
\midrule
\multirow{2}{*}{DBP$_{ZH-EN}$}   & \multicolumn{0}{|c|}{Chinese}  & 19388         & 1701            & 70414         \\
                                  & \multicolumn{0}{|c|}{English}  & 19572         & 1323            & 95142         \\ \midrule
\multirow{2}{*}{DBP$_{JA-EN}$}   & \multicolumn{0}{|c|}{Japaense}  & 19814         & 1299            & 77214         \\
                                  & \multicolumn{0}{|c|}{English}   & 19780         & 1153            & 93484         \\ \midrule
\multirow{2}{*}{DBP$_{FR-EN}$}   & \multicolumn{0}{|c|}{French}   & 19661         & 903             & 105998        \\
                                  & \multicolumn{0}{|c|}{English}  & 19993         & 1208            & 115722        \\ \midrule
\multirow{2}{*}{SRPRS$_{FR-EN}$} & \multicolumn{0}{|c|}{French}   & 15000         & 177             & 33532         \\
                                  & \multicolumn{0}{|c|}{English}  & 15000         & 221             & 36508         \\ \midrule
\multirow{2}{*}{SRPRS$_{DE-EN}$} & \multicolumn{0}{|c|}{German}   & 15000         & 120             & 37377         \\
                                  & \multicolumn{0}{|c|}{English}  & 15000         & 222             & 38363         \\ \bottomrule
\end{tabular}}
\end{minipage}%


\section{Experiments}
\label{experiments}
To thoroughly assess the proposed Triple Feature Propagation method, our experiments target the following research questions:

\begin{enumerate}
    \item Can TFP effectively generalize across different graph encoders and integrate with structural-based state-of-the-art methods?
    \item How does TFP's performance evolve across iterations when applied to different encoders on various datasets?
    \item How does TFP's time efficiency compare to that of encoders?
    \item How does TFP perform when additional information is integrated, compared to other textual EA methods?
\end{enumerate}

\subsection{Experimental Settings}
Our experimental setup includes details on datasets, baselines and evaluation metrics, and implementation details.

\noindent
\textbf{Datasets:}\quad
We utilize two widely recognized datasets to test our decoding algorithm: \textbf{(1)} DBP15K  \cite{sun2017cross} comprises three cross-lingual subsets from multilingual DBpedia. Each subset contains 15,000 entity pairs. \textbf{(2)} SRPRS  \cite{guo2019learning}. Similar to DBP15K in terms of the number of entity pairs but with fewer triples. For consistency with prior research  \cite{wang2018cross,chen2017multilingual}, we use a 30/70 split of pre-aligned entity pairs for training and testing encoders, respectively. Table \ref{datasets} provides the details.


\noindent
\textbf{Baseline and Evaluation Metrics:}\quad
\textbf{(1)} Our evaluation includes six prominent EA encoder methods, divided into GNN-based—MRAEA \cite{mao2020mraea}, RREA \cite{mao2020relational}, Dual-AMN \cite{mao2021boosting} (with Dual-AMN as the structural SOTA)—and translation-based—AlignE \cite{sun2018bootstrapping}, RSN \cite{guo2019learning}, and TransEdge \cite{sun2019transedge} (with TransEdge as the premier in its class). 
\textbf{(2)} For decoding baselines, we benchmark against the Hungarian algorithm \cite{kuhn1955hungarian}, which is the prominent solution in previous work \cite{xu2020coordinated}, and the SOTA of EA decoder, DATTI \cite{mao2022effective}.
\textbf{(3)} Following most previous works \cite{ref_article31, zeng2021comprehensive}, our evaluation employs cosine similarity for EA and H@k and MRR metrics for a thorough evaluation. Details about these metrics can be found in Appendix D.

\noindent
\textbf{Note:} In our primary experiments, we utilize six encoders that only focus on structural information—the core information of EA tasks. Although some recent methods outperform these by integrating additional information, such as textual attributes, we also explore these situations in Section \ref{Q4} for a broader evaluation. Comparative methods are detailed in Table.\ref{additionalresults}.

\noindent
\textbf{Implementaion Details:}\quad
\label{Implementaion Details}
The output dimensions $d$ and other hyper-parameters of all encoders adhere to their original settings in their papers: Dual-AMN $(d=768)$, RREA$(d=600)$, MRAEA $(d=600)$, AlignE $(d=75)$, RSN $(d=256)$, and TransEdge $(d=75)$. The iteration $k$ is set to their best results, which details described in section \ref{iteration}. Other hyper-parameters remain the same for all datasets and methods: relation scale dimension $d_r=512$, entity scale dimension $d_e=16$. All experiments are conducted on a PC with an NVIDIA RTX A6000 GPU and an Intel Xeon Gold 6248R CPU.

\subsection{Main Results (Q1)}

The primary experimental results are summarized in Table \ref{mainresults}. Among the evaluated six encoders, Dual-AMN demonstrates superior performance across all datasets, underscoring the efficacy of GNN encoders. Interestingly, TransEdge shows notable success on the DBP15K dataset but underperforms on SRPRS. This can be attributed to TransEdge's reliance on existing edge semantic information for entity dependency capture, a feature less prevalent in the sparser KGs like SRPRS. Conversely, GNN-based models exhibit robust performance on sparse graphs, highlighting their aptitude for handling sparse topologies.
\vspace{-12pt}
\subsubsection{\textbf{GNN-based encoders:}}
Our TFP approach consistently enhances performance across datasets, including notable improvements with the SOTA structure-focused Dual-AMN method. Specifically, TFP achieves a 2.25\% increase in Hits@1 and a 2.02\% increase in MRR on DBP15K$_{FR-EN}$, and a 1.28\% increase in Hits@1 and a 1.19\% increase in MRR on SRPRS$_{FR-EN}$. The Hungarian particularly benefits MRAEA, probably due to Dual-AMN and RREA's bi-directional relation strategy already incorporating the core idea of the Hungarian \cite{mao2022effective}. Unlike the DATTI, which requires both entity and relation embeddings, TFP can be applied to any encoder, even those that only produce entity embeddings. These findings underscore TFP’s ability to: (\romannumeral1) effectively reconstruct entity embeddings via the gradient flow of Dirichlet energy, facilitating feature propagation and homophily; (\romannumeral2) demonstrate effectiveness and adaptability across various GNN-based graph encoders; (\romannumeral3) enhance structure-focused methods.

\vspace{-12pt}
\subsubsection{\textbf{Translation-based encoders:}}
TFP significantly boosts the performance of translation-based EA encoders, effectively capturing multi-view structural information through the gradient flow to maximize the homophily. Notably, TFP improves AlignE's performance by 28.87\% in Hits@1 on the DBP15K$_{JA-EN}$ dataset and by 25.8\% on the SRPRS$_{DE-EN}$ dataset. For the best translation-based encoder, TransEdge, TFP still achieves substantial enhancements, including at least a 7.01\% increase in Hits@1 on DBP15K and 15.22\% on SRPRS. These findings demonstrate TFP's effectiveness and general applicability to translation-based encoders. Particularly, TFP's significant improvement on the SRPRS dataset indicates its capability to bridge gaps in sparse KG topology understanding, an area where translation-based EA encoders typically struggle.

\begin{table*}[t]
\caption{\footnotesize Main experimental results on DBP15K and SRPRS datasets. All results and initial embeddings were derived using official code with default hyper-parameters. "Improv." indicates the percentage improvement of TFP over EA encoder results. The Hungarian algorithm (Hun) produces only one aligned pair per entity, thus only Hits@1 is reported. DATTI cannot decode MRAEA and RREA, as it requires both entity and relation embeddings, whereas these encoders provide only entity embeddings.}
\label{mainresults}
\setlength{\tabcolsep}{3pt}
\centering
\huge
\resizebox{\textwidth}{!}{
\begin{tabular}{cc|ccccccccccccccc}
\toprule
\multicolumn{2}{c|}{Datasets}     & \multicolumn{3}{c}{DBP$_{FR-EN}$}                & \multicolumn{3}{c}{DBP$_{JA-EN}$}                  & \multicolumn{3}{c}{DBP$_{ZH-EN}$}                  & \multicolumn{3}{c}{SRPRS$_{FR-EN}$}               & \multicolumn{3}{c}{SRPRS$_{DE-EN}$}                \\
 \cmidrule(lr){1-2}\cmidrule(lr){3-5}\cmidrule(lr){6-8}\cmidrule(lr){9-11}\cmidrule(lr){12-14} \cmidrule(lr){15-17}
\multicolumn{2}{c|}{Model}                                                 & H@1              & H@10            & MRR              & H@1              & H@10             & MRR              & H@1              & H@10             & MRR              & H@1              & H@10             & MRR              & H@1              & H@10             & MRR              \\
\midrule
\multicolumn{1}{c|}{\multirow{15}{*}{\rotatebox{90}{GNN-based}}}         & MRAEA            & 71.63            & 94.28           & 80.02            & 68.7             & 93.19            & 77.66            & 68.7             & 92.94            & 77.43            & 43.44            & 75.1             & 53.85            & 56.18            & 51.47            & 64.87            \\
\multicolumn{1}{c|}{}    & +Hun             & 80.33            & -               & -                & 76.54            & -                & -                & 78.04            & -                & -                & 45.29            & -                & -                & 58.70            & -                & -                \\
\multicolumn{1}{c|}{}    & +DATTI           & -                & -               & -                & -                & -                & -                & -                & -                & -                & -                & -                & -                & -                & -                & -                \\
\multicolumn{1}{c|}{}                               &\cellcolor[HTML]{EFEFEF}\textbf{+TFP}              &\cellcolor[HTML]{EFEFEF} 75.45            &\cellcolor[HTML]{EFEFEF} 95.87           &\cellcolor[HTML]{EFEFEF} 83.04            &\cellcolor[HTML]{EFEFEF} 74.26            &\cellcolor[HTML]{EFEFEF} \cellcolor[HTML]{EFEFEF}94.98            &\cellcolor[HTML]{EFEFEF} 82.08            &\cellcolor[HTML]{EFEFEF} 74.4             &\cellcolor[HTML]{EFEFEF} 94.54            &\cellcolor[HTML]{EFEFEF} 81.82            &\cellcolor[HTML]{EFEFEF} 46.06            &\cellcolor[HTML]{EFEFEF} 75.55            &\cellcolor[HTML]{EFEFEF} 55.89            &\cellcolor[HTML]{EFEFEF} 58.83            &\cellcolor[HTML]{EFEFEF} 81.89            &\cellcolor[HTML]{EFEFEF} 66.78            \\
\multicolumn{1}{c|}{}                                   &\cellcolor[HTML]{EFEFEF} \textbf{Improv.} &\cellcolor[HTML]{EFEFEF} \textbf{5.3\%}  &\cellcolor[HTML]{EFEFEF} \textbf{1.7\%} &\cellcolor[HTML]{EFEFEF} \textbf{3.8\%}  &\cellcolor[HTML]{EFEFEF} \textbf{8.1\%}  & \cellcolor[HTML]{EFEFEF}\textbf{1.9\%}  & \cellcolor[HTML]{EFEFEF}\textbf{5.7\%}  & \cellcolor[HTML]{EFEFEF}\textbf{8.3\%}  &\cellcolor[HTML]{EFEFEF} \textbf{1.7\%}  & \cellcolor[HTML]{EFEFEF}\textbf{5.7\%}  &\cellcolor[HTML]{EFEFEF} \textbf{6.0\%}  &\cellcolor[HTML]{EFEFEF} \textbf{0.6\%}  &\cellcolor[HTML]{EFEFEF} \textbf{3.8\%}  &\cellcolor[HTML]{EFEFEF} \textbf{4.7\%}  & \cellcolor[HTML]{EFEFEF}\textbf{59.1\%} &\cellcolor[HTML]{EFEFEF} \textbf{2.9\%}  \\ \cmidrule{2-17}
\multicolumn{1}{c|}{}                                   & RREA             & 73.4             & 94.8            & 81.36            & 70.59            & 94.13            & 79.11            & 70.73            & 93.21            & 78.97            & 42.96            & 73.96            & 53.41            & 56.74            & 81.24            & 65.18            \\
\multicolumn{1}{c|}{}   & +Hun             & 80.42            & -               & -                & 78.44            & -                & -                & 78.69            & -                & -                & 45.68            & -                & -                & 58.96            & -                & -                \\
\multicolumn{1}{c|}{}   & +DATTI           & -                & -               & -                & -                & -                & -                & -                & -                & -                & -                & -                & -                & -                & -                & -                \\           
\multicolumn{1}{c|}{}                                   & \cellcolor[HTML]{EFEFEF}\textbf{+TFP}    & \cellcolor[HTML]{EFEFEF}80.79           & \cellcolor[HTML]{EFEFEF}96.98          & \cellcolor[HTML]{EFEFEF}86.79           & \cellcolor[HTML]{EFEFEF}78.83           & \cellcolor[HTML]{EFEFEF}96.00           & \cellcolor[HTML]{EFEFEF}85.23           & \cellcolor[HTML]{EFEFEF}79.43           & \cellcolor[HTML]{EFEFEF}95.45           & \cellcolor[HTML]{EFEFEF}85.35           & \cellcolor[HTML]{EFEFEF}47.23           & \cellcolor[HTML]{EFEFEF}75.60           & \cellcolor[HTML]{EFEFEF}56.67           & \cellcolor[HTML]{EFEFEF}59.98           & \cellcolor[HTML]{EFEFEF}82.33           & \cellcolor[HTML]{EFEFEF}67.80           \\
                                     
\multicolumn{1}{c|}{}                                   & \cellcolor[HTML]{EFEFEF}\textbf{Improv.} & \cellcolor[HTML]{EFEFEF}\textbf{10.1\%} & \cellcolor[HTML]{EFEFEF}\textbf{2.3\%} & \cellcolor[HTML]{EFEFEF}\textbf{6.7\%}  & \cellcolor[HTML]{EFEFEF}\textbf{11.7\%} & \cellcolor[HTML]{EFEFEF}\textbf{2.0\%}  & \cellcolor[HTML]{EFEFEF}\textbf{7.7\%}  & \cellcolor[HTML]{EFEFEF}\textbf{12.3\%} & \cellcolor[HTML]{EFEFEF}\textbf{2.4\%}  & \cellcolor[HTML]{EFEFEF}\textbf{8.1\%}  & \cellcolor[HTML]{EFEFEF}\textbf{9.9\%}  & \cellcolor[HTML]{EFEFEF}\textbf{2.2\%}  & \cellcolor[HTML]{EFEFEF}\textbf{6.1\%}  & \cellcolor[HTML]{EFEFEF}\textbf{5.7\%}  & \cellcolor[HTML]{EFEFEF}\textbf{1.3\%}  & \cellcolor[HTML]{EFEFEF}\textbf{4.0\%}  \\
                                     
                                     \cmidrule{2-17}
\multicolumn{1}{c|}{}                                   & DualAMN          & 83.43            & 96.19           & 88.14            & 80.31            & 94.69            & 85.57            & 80.39            & 93.68            & 85.29            & 48.28            & 75.51            & 57.34            & 61.2             & 81.91            & 68.3             \\
\multicolumn{1}{c|}{} & +Hun             & 83.87            & -               & -                & 80.39            & -                & -                & 80.12            & -                & -                & 48.32            & -                & -                & 61.15            & -                & -                \\
\multicolumn{1}{c|}{} & +DATTI           & 87.30            & 97.90           & 91.30            & 83.60            & 96.90            & 88.40            & 83.50            & 95.30            & 88.00            & 49.50            & 76.00            & 58.30            & 62.30            & 82.20            & 69.10            \\   
\multicolumn{1}{c|}{}                                  & \cellcolor[HTML]{EFEFEF}\textbf{+TFP}    & \cellcolor[HTML]{EFEFEF}85.31           & \cellcolor[HTML]{EFEFEF}97.26          & \cellcolor[HTML]{EFEFEF}89.92           & \cellcolor[HTML]{EFEFEF}81.26           & \cellcolor[HTML]{EFEFEF}95.85           & \cellcolor[HTML]{EFEFEF}86.82           & \cellcolor[HTML]{EFEFEF}81.64           & \cellcolor[HTML]{EFEFEF}95.24           & \cellcolor[HTML]{EFEFEF}86.84           & \cellcolor[HTML]{EFEFEF}48.90           & \cellcolor[HTML]{EFEFEF}76.28           & \cellcolor[HTML]{EFEFEF}58.02           & \cellcolor[HTML]{EFEFEF}61.84           & \cellcolor[HTML]{EFEFEF}82.61           & \cellcolor[HTML]{EFEFEF}69.01           \\

\multicolumn{1}{c|}{}                            & \cellcolor[HTML]{EFEFEF}\textbf{Improv.} & \cellcolor[HTML]{EFEFEF}\textbf{2.3\%}  & \cellcolor[HTML]{EFEFEF}\textbf{1.1\%} & \cellcolor[HTML]{EFEFEF}\textbf{2.0\%}  & \cellcolor[HTML]{EFEFEF}\textbf{1.2\%}  & \cellcolor[HTML]{EFEFEF}\textbf{1.2\%}  & \cellcolor[HTML]{EFEFEF}\textbf{1.5\%}  & \cellcolor[HTML]{EFEFEF}\textbf{1.6\%}  & \cellcolor[HTML]{EFEFEF}\textbf{1.7\%}  & \cellcolor[HTML]{EFEFEF}\textbf{1.8\%}  & \cellcolor[HTML]{EFEFEF}\textbf{1.3\%}  & \cellcolor[HTML]{EFEFEF}\textbf{1.0\%}  & \cellcolor[HTML]{EFEFEF}\textbf{1.2\%}  & \cellcolor[HTML]{EFEFEF}\textbf{1.0\%}  & \cellcolor[HTML]{EFEFEF}\textbf{0.9\%}  & \cellcolor[HTML]{EFEFEF}\textbf{1.0\%}  \\

\midrule
\multicolumn{1}{c|}{\multirow{15}{*}{\rotatebox{90}{Translation-based}}} & AlignE           & 53.36            & 86.55           & 64.93            & 50.12            & 83.91            & 61.58            & 50.96            & 82.3             & 61.7             & 34.33            & 65.44            & 44.68            & 44.07            & 69.43            & 52.7             \\
\multicolumn{1}{c|}{} & +Hun             & 64.16            & -               & -                & 58.41            & -                & -                & 60.37            & -                & -                & 37.24            & -                & -                & 50.15            & -                & -                \\
\multicolumn{1}{c|}{} & +DATTI           & 58.55            & 85.96           & 68.01            & 55.62            & 82.80            & 64.80            & 57.50            & 82.70            & 66.15            & 39.15            & 69.12            & 49.19            & 53.55            & 75.59            & 61.29            \\                                   
\multicolumn{1}{c|}{}                                   & \cellcolor[HTML]{EFEFEF}\textbf{+TFP}    & \cellcolor[HTML]{EFEFEF}68.45           & \cellcolor[HTML]{EFEFEF}91.96          & \cellcolor[HTML]{EFEFEF}76.81           & \cellcolor[HTML]{EFEFEF}64.59           & \cellcolor[HTML]{EFEFEF}90.77           & \cellcolor[HTML]{EFEFEF}73.76           & \cellcolor[HTML]{EFEFEF}66.98           & \cellcolor[HTML]{EFEFEF}89.78           & \cellcolor[HTML]{EFEFEF}74.89           & \cellcolor[HTML]{EFEFEF}42.11           & \cellcolor[HTML]{EFEFEF}71.80           & \cellcolor[HTML]{EFEFEF}51.98           & \cellcolor[HTML]{EFEFEF}55.44           & \cellcolor[HTML]{EFEFEF}78.61           & \cellcolor[HTML]{EFEFEF}63.48           \\

\multicolumn{1}{c|}{}   & \cellcolor[HTML]{EFEFEF}\textbf{Improv.} & \cellcolor[HTML]{EFEFEF}\textbf{28.3\%} & \cellcolor[HTML]{EFEFEF}\textbf{6.3\%} & \cellcolor[HTML]{EFEFEF}\textbf{18.3\%} & \cellcolor[HTML]{EFEFEF}\textbf{28.9\%} & \cellcolor[HTML]{EFEFEF}\textbf{8.2\%}  & \cellcolor[HTML]{EFEFEF}\textbf{19.8\%} & \cellcolor[HTML]{EFEFEF}\textbf{31.4\%} & \cellcolor[HTML]{EFEFEF}\textbf{9.1\%}  & \cellcolor[HTML]{EFEFEF}\textbf{21.4\%} & \cellcolor[HTML]{EFEFEF}\textbf{22.7\%} & \cellcolor[HTML]{EFEFEF}\textbf{9.7\%}  & \cellcolor[HTML]{EFEFEF}\textbf{16.3\%} & \cellcolor[HTML]{EFEFEF}\textbf{25.8\%} & \cellcolor[HTML]{EFEFEF}\textbf{13.2\%} & \cellcolor[HTML]{EFEFEF}\textbf{20.5\%} \\
                                       \cmidrule{2-17}
\multicolumn{1}{c|}{}                                   & RSN              & 63.17            & 86.37           & 71.33            & 59.13            & 81.5             & 67.03            & 60.67            & 82.86            & 68.53            & 35.1             & 63.78            & 44.73            & 51.07            & 74.43            & 59.02            \\
\multicolumn{1}{c|}{}     & +Hun             & 69.25            & -               & -                & 63.33            & -                & -                & 66.08            & -                & -                & 37.42            & -                & -                & 53.79            & -                & -                \\
\multicolumn{1}{c|}{}     & +DATTI           & 72.00            & 91.80           & 79.00            & 68.60            & 89.50            & 75.90            & 72.10            & 90.30            & 78.50            & 40.70            & 69.40            & 50.20            & 55.90            & 78.20            & 63.70            \\
                                   
\multicolumn{1}{c|}{}                                   & \cellcolor[HTML]{EFEFEF}\textbf{+TFP}    & \cellcolor[HTML]{EFEFEF}77.79           & \cellcolor[HTML]{EFEFEF}92.45          & \cellcolor[HTML]{EFEFEF}83.73           & \cellcolor[HTML]{EFEFEF}73.53           & \cellcolor[HTML]{EFEFEF}91.97           & \cellcolor[HTML]{EFEFEF}80.04           & \cellcolor[HTML]{EFEFEF}75.44           & \cellcolor[HTML]{EFEFEF}91.71           & \cellcolor[HTML]{EFEFEF}81.33           & \cellcolor[HTML]{EFEFEF}44.32           & \cellcolor[HTML]{EFEFEF}73.22           & \cellcolor[HTML]{EFEFEF}53.97           & \cellcolor[HTML]{EFEFEF}59.27           & \cellcolor[HTML]{EFEFEF}81.14           & \cellcolor[HTML]{EFEFEF}66.84           \\
\multicolumn{1}{c|}{}     & \cellcolor[HTML]{EFEFEF}\textbf{Improv.} & \cellcolor[HTML]{EFEFEF}\textbf{23.1\%} & \cellcolor[HTML]{EFEFEF}\textbf{7.0\%} & \cellcolor[HTML]{EFEFEF}\textbf{17.4\%} & \cellcolor[HTML]{EFEFEF}\textbf{24.4\%} & \cellcolor[HTML]{EFEFEF}\textbf{12.8\%} & \cellcolor[HTML]{EFEFEF}\textbf{19.4\%} & \cellcolor[HTML]{EFEFEF}\textbf{24.3\%} & \cellcolor[HTML]{EFEFEF}\textbf{10.7\%} & \cellcolor[HTML]{EFEFEF}\textbf{18.7\%} & \cellcolor[HTML]{EFEFEF}\textbf{26.3\%} & \cellcolor[HTML]{EFEFEF}\textbf{14.8\%} & \cellcolor[HTML]{EFEFEF}\textbf{20.7\%} & \cellcolor[HTML]{EFEFEF}\textbf{16.1\%} & \cellcolor[HTML]{EFEFEF}\textbf{9.0\%}  & \cellcolor[HTML]{EFEFEF}\textbf{13.2\%} \\
                                      \cmidrule{2-17}
\multicolumn{1}{c|}{}                                   & TransEdge        & 76.9             & 93.97           & 83.03            & 74.66            & 92.93            & 81.1             & 76.19            & 92.16            & 81.81            & 40.81            & 67.66            & 49.73            & 55.65            & 75.3             & 64.28            \\
\multicolumn{1}{c|}{}  & +Hun             & 79.55            & -               & -                & 77.06            & -                & -                & 78.72            & -                & -                & 42.65            & -                & -                & 57.41            & -                & -                \\
\multicolumn{1}{c|}{}  & +DATTI           & 81.80            & 96.50           & 87.30            & 80.40            & 95.70            & 86.10            & 81.40            & 94.70            & 86.30            & 44.10            & 70.70            & 52.10            & 59.30            & 78.20            & 67.30            \\
                                    
\multicolumn{1}{c|}{}                                   & \cellcolor[HTML]{EFEFEF}\textbf{+TFP}    & \cellcolor[HTML]{EFEFEF}82.36           & \cellcolor[HTML]{EFEFEF}96.90          & \cellcolor[HTML]{EFEFEF}87.91           & \cellcolor[HTML]{EFEFEF}80.52           & \cellcolor[HTML]{EFEFEF}95.96           & \cellcolor[HTML]{EFEFEF}86.19           & \cellcolor[HTML]{EFEFEF}81.53           & \cellcolor[HTML]{EFEFEF}95.21           & \cellcolor[HTML]{EFEFEF}86.50           & \cellcolor[HTML]{EFEFEF}49.03           & \cellcolor[HTML]{EFEFEF}75.22           & \cellcolor[HTML]{EFEFEF}56.49           & \cellcolor[HTML]{EFEFEF}64.12           & \cellcolor[HTML]{EFEFEF}81.70           & \cellcolor[HTML]{EFEFEF}70.95           \\
\multicolumn{1}{c|}{} & \cellcolor[HTML]{EFEFEF}\textbf{Improv.} & \cellcolor[HTML]{EFEFEF}\textbf{7.1\%}  & \cellcolor[HTML]{EFEFEF}\textbf{3.1\%} & \cellcolor[HTML]{EFEFEF}\textbf{5.9\%}  & \cellcolor[HTML]{EFEFEF}\textbf{7.8\%}  & \cellcolor[HTML]{EFEFEF}\textbf{3.3\%}  & \cellcolor[HTML]{EFEFEF}\textbf{6.3\%}  & \cellcolor[HTML]{EFEFEF}\textbf{7.0\%}  & \cellcolor[HTML]{EFEFEF}\textbf{3.3\%}  & \cellcolor[HTML]{EFEFEF}\textbf{5.7\%}  & \cellcolor[HTML]{EFEFEF}\textbf{20.1\%} & \cellcolor[HTML]{EFEFEF}\textbf{11.2\%} & \cellcolor[HTML]{EFEFEF}\textbf{13.6\%} & \cellcolor[HTML]{EFEFEF}\textbf{15.2\%} & \cellcolor[HTML]{EFEFEF}\textbf{8.5\%}  & \cellcolor[HTML]{EFEFEF}\textbf{10.4\%} \\
 \bottomrule
\end{tabular}}
\end{table*}

\subsection{Exploratory Analysis (Q2)}
\label{iteration}

\begin{figure*}[t!]
\centering
\includegraphics[width = \linewidth]{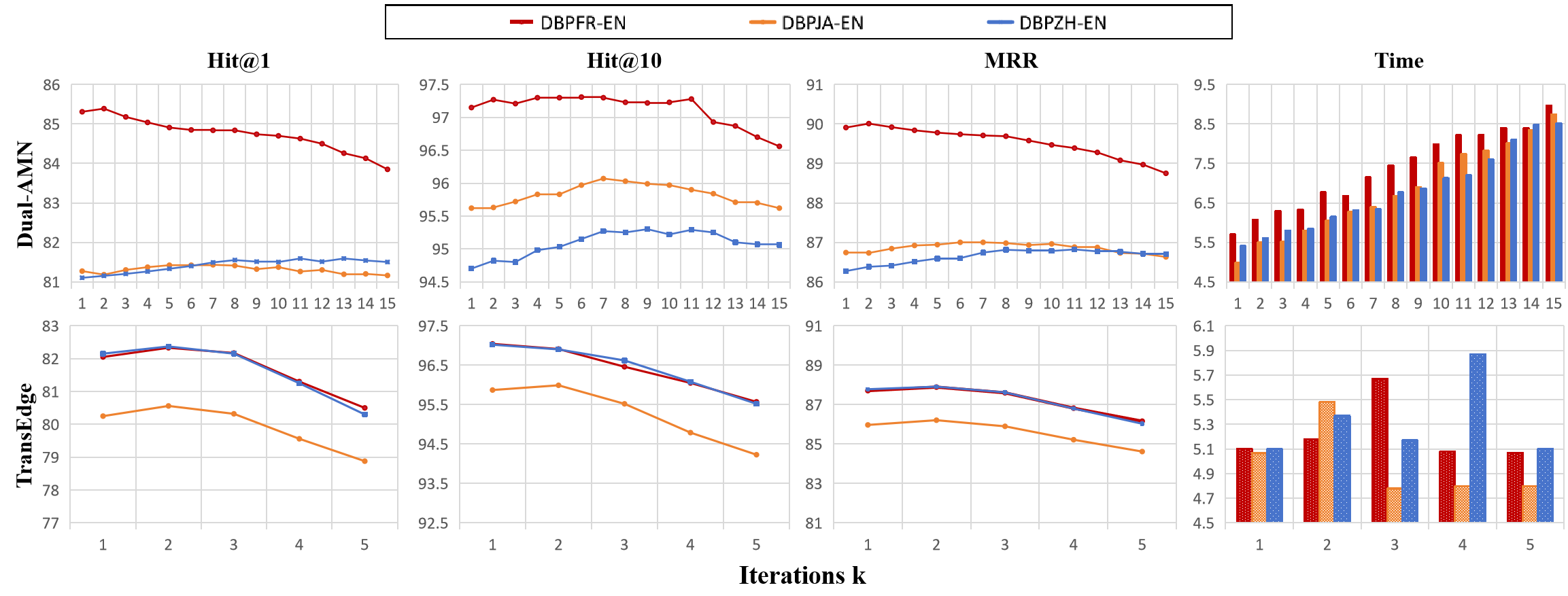}
\caption{Metrics and execution time (seconds) on DBP15K across iterations $k$.} 
\label{iteration}
\vspace{-10pt}
\end{figure*}

In this subsection, we delve into the iterative performance of TFP when using Dual-AMN and TransEdge encoders on the DBP15K. The results, depicting TFP's behavior across different iteration counts $k$, are illustrated in Figure \ref{iteration}.

\vspace{-12pt}
\subsubsection{\textbf{Performance with Dual-AMN Encoder.} }
Dual-AMN stands out in the DBP15K dataset for its performance in Hit@k and MRR metrics. TFP's iteration-dependent performance shows a typical increase followed by a plateau, and sometimes a decrease due to over-smoothing. For example, TFP peaks at $k=11$ iterations on DBP${ZH-EN}$ before stabilizing. However, Hit@10 initially decreases, then stabilizes, showing convergence despite random fluctuations. TFP shows quick convergence and notable improvements on DBP${FR-EN}$, benefiting from its diverse topological structures \cite{sun2017cross}. Time analysis indicates that while computational time increases with iterations, it remains under 10 seconds even at $k=15$, which is minor compared to the encoder's training time.

\vspace{-12pt}
\subsubsection{\textbf{Analysis with TransEdge Encoder.}}
TransEdge, a leading translation-based encoder, requires fewer iterations for TFP to converge and exhibit significant enhancements. Unlike Dual-AMN, TransEdge excels in capturing higher-level semantic information but falls short in detailing topological structures, a common trait among translation-based encoders. TFP effectively compensates for this limitation by reconstructing entity features that assimilate topological information from three distinct perspectives, thereby enriching the quality of entity features. This implies that during the encoding stage, structural noise due to model bias is minimal, reducing the need for extensive denoising during decoding. Interestingly, an inverse proportionality emerges between iteration count and time cost beyond $k=2$, characterized by a diminishing time requirement with increasing iterations. Our comprehensive experimental analysis attributes this phenomenon to TFP's enhancement of homophily in relational features during the decoding phase leading to over-smoothing, particularly at higher iteration counts. This augmentation results in sparser features as per equations (\ref{relationequation}) and (\ref{tripleequation}), culminating in over-smoothing that, in turn, leads to a reduction in computational time due to the increased feature sparsity.

\subsection{Time Complexity Analysis (Q3)}

\begin{table}[t]
\setlength{\tabcolsep}{3pt}
\centering
\scriptsize
\captionof{table}{\small Execution time (s) of EA methods decoding with TFP, where TFP(C) / (G) denotes CPU / GPU execution.}
\label{Timeresult}
\resizebox{\textwidth}{!}{
\begin{tabular}{c|c|c
>{\columncolor[HTML]{EFEFEF}}c 
>{\columncolor[HTML]{EFEFEF}}c| c
>{\columncolor[HTML]{EFEFEF}}c 
>{\columncolor[HTML]{EFEFEF}}c| c
>{\columncolor[HTML]{EFEFEF}}c 
>{\columncolor[HTML]{EFEFEF}}c }
\toprule
& Time/s & AlignE & TFP(C) & TFP(G) & RSN  & TFP(C) & TFP(G) & TransEdge & TFP(C) & TFP(G) \\
\cmidrule{2-11}
& DBP15K & 2087   & 13.9   & 4.8    & 3659 & 14.2   & 4.8    & 1625      & 12.9   & 4.7    \\
\multirow{-3}{*}{\begin{tabular}[c]{@{}c@{}}Translation\\ based\end{tabular}} & SRPRS  & 1190   & 8.1    & 4.2    & 1279 & 9.2    & 3.7    & 907       & 8.7    & 3.7    \\ \midrule
& Time/s & MRAEA  & TFP(C) & TFP(G) & RREA & TFP(C) & TFP(G) & DualAMN   & TFP(C) & TFP(G) \\
\cmidrule{2-11}
& DBP15K & 1743   & 16.6   & 5.9    & 323  & 16.3   & 5.7    & 177       & 17.7   & 5.1    \\
\multirow{-3}{*}{\begin{tabular}[c]{@{}c@{}}GNN\\ based\end{tabular}}         & SRPRS  & 558    & 10.6   & 4.6    & 276  & 11     & 3.8    & 163       & 11.4   & 4.6    \\
\toprule
\end{tabular}}
\end{table}

TFP operates through iterative sparse-to-dense matrix multiplications, positioned as the subsequent step after encoders. Adapting from \cite{mao2022lightea}, we streamline these multiplications in Eqs. (\ref{relationequation}) and (\ref{entityequation}) into the \textit{sparse-dense-multiplications} form to sustain a computational complexity of $O(k(|\mathcal{T}|d_r + |\mathcal{E}|d_e))$. For Eqs. (\ref{tripleequation}) and (\ref{fianlequation}), Tensorflow's sparse matrix multiplications are utilized, bringing the complexity down to approximately $O(|\mathcal{T}|d_r)$ and $O(|\mathcal{E}|d_e)$ respectively.

In Table \ref{Timeresult}, we present a detailed comparison of the time costs associated with the training and decoding phases (TFP) across six EA encoders, utilizing both CPU and GPU, on DBP15K and SRPRS datasets. Notably, the propagation step of TFP constitutes only a minor portion of the overall runtime, with the bulk of the time being allocated to the encoder training process. Remarkably, on a GPU, TFP's execution time peaks at just 5.9 seconds. Even when operating on a CPU, TFP requires a maximum of merely 17.7 seconds. This duration is insignificant when contrasted with even the fastest encoder, Dual-AMN.

A comparative observation reveals that TFP's application is expedited on translation-based encoders relative to GNN-based ones. This acceleration is attributable to the varying entity embedding dimensions discussed in Section \ref{Implementaion Details}. The output dimension for GNN-based encoders, particularly for MRAEA and Dual-AMN, is set at $d=600$ and $d=768$, respectively, which is larger than that of the translation-based encoders. This discrepancy in dimensionality directly influences the time efficiency of TFP, as demonstrated by the faster performance on translation-based models.

\subsection{Additional Information (Q4)}
\label{Q4}

\begin{table}[t]
\vspace{-8pt}
\caption{\small Performances of textual EA methods, which make alignment by additional entity name. The results of baselines are collected from the original papers.}
\label{additionalresults}
\setlength{\tabcolsep}{3pt}
\centering
\resizebox{\textwidth}{!}{
\begin{tabular}{c|ccccccccccccccc}
\toprule
\multicolumn{1}{c|}{Datasets}     & \multicolumn{3}{c}{DBP$_{FR-EN}$}                & \multicolumn{3}{c}{DBP$_{JA-EN}$}                  & \multicolumn{3}{c}{DBP$_{ZH-EN}$}                  & \multicolumn{3}{c}{SRPRS$_{FR-EN}$}               & \multicolumn{3}{c}{SRPRS$_{DE-EN}$}                \\
 \cmidrule(lr){1-1}\cmidrule(lr){2-4}\cmidrule(lr){5-7}\cmidrule(lr){8-10}\cmidrule(lr){11-13} \cmidrule(lr){14-16}
Models       & H@1      & MRR      & Time/s    & H@1      & MRR      & Time/s    & H@1      & MRR      & Time/s    & H@1       & MRR       & Time/s    & H@1       & MRR       & Time/s \\
\midrule
GM-Align \cite{xu2019cross}     & 89.4     & -        & 26328     & 73.9     & -        & 26328     & 67.9     & -        & 26328     & 57.4      & 60.2      & 13032     & 68.1      & 71.0      & 13032     \\
RDGCN \cite{wu2019relation}        & 87.3     & 90.1     & 6711      & 76.3     & 76.3     & 6711      & 69.7     & 75.0     & 6711      & 67.2      & 71.0      & 886       & 77.9      & 82.0      & 886       \\
HGCN \cite{wu2019jointly}         & 76.6     & 81.0     & 11275     & 89.2     & 91.0     & 11275     & 72.0     & 76.0     & 11275     & 67.0      & 71.0      & 2504      & 76.3      & 80.1      & 2504      \\
AtrrGNN \cite{liu2020exploring}      & 91.9     & 91.0     & -         & 78.3     & 83.4     & -         & 79.6     & 84.5     & -         & -         & -         & -         & -         & -         & -         \\
EPEA \cite{wang2020knowledge}         & 95.5     & 96.7     & -         & 92.4     & 94.2     & -         & 88.5     & 91.1     & -         & -         & -         & -         & -         & -         & -         \\
SEU \cite{mao2021alignment}        & 98.8     & 99.2     & 17.0      & 95.6     & 96.9     & 16.2      & 90.0     & 92.4     & 16.2      & 98.2      & 98.6      & 9.6       & 98.3      & 98.7      & 9.8       \\
LightEA \cite{mao2022lightea}      & \textbf{99.5}     & \underline{99.6}     & 15.7      & \textbf{98.1}     & \textbf{98.7}     & 14.8      & \textbf{95.2}     & \textbf{96.4}     & 15.2      & \underline{98.6 }     & 98.9      & 11.2     & \underline{98.8 }     & \underline{99.1 }     & 11.4     \\
\midrule
\rowcolor[HTML]{EFEFEF} 
TFP(SEU)     & 99.05       & 99.4  & 4.9    & 96.21       & 97.19 & 4.7    & 90.51       & 92.63 & 5.1    & 98.55         & 98.95 & 3.7    & 98.57         & 98.95 & 3.8    \\
\rowcolor[HTML]{EFEFEF} 
TFP(LightEA) & 99.21       & 99.65 & 5.1    & \underline{97.94 }      & \textbf{98.7}  & 4.9    & \underline{94.34}       & \underline{95.62} & 5.4    & \underline{98.6}          & \underline{99.03} & 3.9    & 98.6          & 99    & 3.8 \\
\rowcolor[HTML]{EFEFEF} 
TFP & \underline{99.33 }    & \textbf{99.68}   & \textbf{4.8}       & 96.68     & \underline{97.54}    & \textbf{4.7}       & 91.54    & 93.59    & \textbf{4.9}       & \textbf{98.99}      & \textbf{99.28}      & \textbf{3.6}       & \textbf{98.87}      & \textbf{99.15}      & \textbf{3.6}       \\
\bottomrule
\end{tabular}}
\end{table}

Within the scope of EA, most experiments have largely focused on purely structural-based methods. Nonetheless, some studies \cite{xu2019cross, wu2019jointly} have integrated additional information, such as entity names, to boost performance. These approaches often utilize machine translation systems or cross-lingual word embeddings to convert entity names into a unified semantic space, using averaged pre-trained word embeddings to establish initial features for entities and relations. In this context, the initial entity features are pre-trained, making these textual EA methods resemble decoding algorithms that focus on topology and dependencies. TFP is capable of fulfilling a similar role, akin to scenarios where seed alignment sets are absent.

To ensure fair comparisons with textual-based EA methods, we used pre-trained GLoVe word embeddings \cite{pennington2014glove} for entity names as the initial entity features $\mathbf{X}_e^{(0)}$ decoding by TFP, treating it as an unsupervised textual-based method since no seed alignments are required. TFP variants such as TFP (SEU) and TFP (LightEA) use SEU and LightEA as encoders, respectively.

Table \ref{additionalresults} shows TFP's performance relative to seven other baseline textual-based methods across the DBP15K and SRPRS datasets. As an unsupervised textual EA method, TFP outperforms all competitors on SRPRS with minimal time cost and ranks near the top on DBP15K. The comparative performance of SEU, LightEA, and TFP highlights their formidable competitiveness, suggesting that propagation strategies may be more effective than traditional neural network approaches, questioning the necessity of complex neural networks and seed alignments when using textual features. Notably, TFP achieves these superior results in under 5 seconds, setting a new standard for SOTA performance on SRPRS—a dataset representative of sparse KGs commonly encountered in real-world applications.

However, TFP(SEU) and TFP(LightEA) show a slight decline in performance when compared to using GLoVe embeddings alone. This can be attributed to the inherent similarities between SEU, LightEA, and TFP: all of them operate on propagation principles, ostensibly acting as smoothing strategies predicated on topological data. In scenarios involving sparse KGs, SEU's and LightEA`s ability to comprehensively capture topological nuances and facilitate entity embedding smoothing is significant. However, subsequent application of TFP may precipitate over-smoothing, which elucidates the observed performance discrepancy.

\section{Conclusion}
\label{conclusion}
In this paper, we presented Triple Feature Propagation (TFP), an innovative, effective and theoretical approach for entity alignment decoding. TFP extends traditional adjacency matrices into multi-view matrices—encompassing entity-to-entity, entity-to-relation, relation-to-entity, and relation-to-triples relationships—to reconstruct entity embeddings. This reconstruction, aimed at homophily maximization, is achieved by minimizing the Dirichlet energy, facilitating a natural feature propagation through gradient flow theory. Our experiments confirm TFP's capability to improve the performance of various EA methods, with less than 6 seconds of additional computational time. This efficiency and general applicability make TFP a promising advancement in the field of entity alignment.

\paragraph*{Supplemental Material Statement:}
The source code and constructed datasets are available at our repository:
https://github.com/wyy-code/TFP
The proofs of our proposition and other details are provided in the Appendix file.



\newpage
\bibliographystyle{splncs04}
\bibliography{reference.bib}

\newpage
\appendix

\section{Proof of Proposition 1}
\textit{Proposition 1.} \textbf{(Existence of the solution.)} The matrix $\Delta_{oo}$ is non-singular, allowing the reconstruction of other entity features $\mathbf{x}_o$ using seed alignment entity features $\mathbf{x}_s$ as ${\mathbf{x}}_{o}(t) = -\mathbf{\Delta}^{-1}_{oo}\mathbf{\Delta}_{os}{\mathbf{x}}_s$.
\begin{proof}
Initially, consider an undirected connected graph scenario where $\mathbf{\Delta}_{oo}$ is a sub-matrix of the Laplacian matrix $\mathbf{\Delta}$. Given that sub-Laplacian matrices of undirected connected graphs are invertible  \cite{rossi2022unreasonable}, thus $\Delta_{oo}$ is non-singular. The spectral properties of eigenvalues in undirected graphs suggest similar non-singularity for directed graphs \cite{maskey2023fractional}.

For general cases, assume an ordered representation of the adjacency matrix for a disconnected graph as:
\begin{equation}
\label{disconnected}
\mathbf{A} = \text{diag}(\mathbf{A}_1,\dots,\mathbf{A}_r)
\end{equation}
Here, $\mathbf{A}_{i}, i={1,\dots,r}$, represents each connected component. The gradient flow in equation (\ref{solution}) is applicable to each connected component independently for disconnected graphs.
\end{proof}

\section{Proof of Proposition 2}
\textit{Proposition 2.} 
\textbf{(Approximation of the solution.)} 
With the iterative reconstruction solution as delineated in equation (\ref{itersolution}), and considering a sufficiently large iteration count $N$, the entity features will approximate the results of feature propagation as follows:
\begin{equation}
    \mathbf{X}^{(N)} \approx 
    \begin{pmatrix}
        \mathbf{x}_{s} \\ \mathbf{\Delta}^{-1}_{oo}\mathbf{\widetilde{A}}_{os}{\mathbf{x}}_s\\
    \end{pmatrix}
\end{equation}
\begin{proof}
Commencing with the initial entity features $\mathbf{X}^{(0)}$ generated by EA encoders and applying equation (\ref{itersolution}), we iterate:
\begin{equation}
\begin{aligned}
    \begin{pmatrix}
        \mathbf{x}^{(k)}_s \\
        \mathbf{x}^{(k)}_{o}\\
    \end{pmatrix}
     & = 
     \begin{pmatrix}
         \mathbf{I} & \mathbf{0}\\
        \mathbf{\widetilde{A}}_{os} & \mathbf{\widetilde{A}}_{oo}\\
     \end{pmatrix} 
    \begin{pmatrix}
        \mathbf{x}^{(k-1)}_s \\
        \mathbf{x}^{(k-1)}_{o}\\
    \end{pmatrix}
    = \begin{pmatrix}
    \mathbf{x}^{(k-1)}_s \\ \mathbf{\widetilde{A}}_{os}\mathbf{x}^{(k-1)}_s + \mathbf{\widetilde{A}}_{oo}\mathbf{x}^{(k-1)}_o \\
\end{pmatrix}
\end{aligned}
\end{equation}
Given the stationary nature of the seed alignment entity features $\mathbf{x}_s$, we have the equation $\mathbf{x}^{(k)}_s = \mathbf{x}^{(k-1)}_s=\mathbf{x}_s$. The focus then shifts to the convergence of $\mathbf{x}_o$::
\begin{equation}
    \mathbf{x}^{(k)}_o = \mathbf{\widetilde{A}}_{os}\mathbf{x}_s + \mathbf{\widetilde{A}}_{oo}\mathbf{x}^{(k-1)}_o
\end{equation}
Expanding and analyzing the limit for the stationary state, we find:
\begin{equation}
\begin{aligned}
    \lim\limits_{k \to \infty} \mathbf{x}_o^{(k)} & = \mathbf{\widetilde{A}}_{os}\mathbf{x}_s + \lim\limits_{k \to \infty}\sum_{i=2}^k\mathbf{\widetilde{A}}_{oo}^{i-1}\mathbf{\widetilde{A}}_{os}\mathbf{x}_s + \lim\limits_{k \to \infty}\mathbf{\widetilde{A}}_{oo}^k\mathbf{x}^{(0)}_o \\
    & = \lim\limits_{k \to \infty}\mathbf{\widetilde{A}}_{oo}^k\mathbf{x}^{(0)}_o + \lim\limits_{k \to \infty}\sum_{i=1}^k\mathbf{\widetilde{A}}_{oo}^{i-1}\mathbf{\widetilde{A}}_{os}\mathbf{x}_s
\end{aligned}
\end{equation}
Spectral graph theory provides critical insights into the properties of the Laplacian matrix $\mathbf{\Delta}$. It establishes that the eigenvalues of $\mathbf{\Delta}$ are confined within the range [0,2). This spectral characteristic has direct implications for the matrix $\mathbf{\widetilde{A}} = \mathbf{I} - \mathbf{\Delta}$, whose eigenvalues are consequently within the range (-1,1]. A pivotal aspect of this discussion, as elucidated in Proposition \ref{existence}, is the non-singularity of $\mathbf{\Delta}_{oo}$. The absence of 0 as an eigenvalue of $\mathbf{\Delta}_{oo}$ implies that $\mathbf{\widetilde{A}}$'s eigenvalues strictly occupy the interval (-1,1), thereby excluding the endpoints. This spectral behavior significantly influences the convergence properties of the iterative process. Specifically, the limit $\lim\limits_{k \to \infty}\mathbf{\widetilde{A}}_{oo}^n\mathbf{x}^{(0)}_o$ approaches 0. Furthermore, the summation $\lim\limits_{k \to \infty}\sum_{i=1}^k\mathbf{\widetilde{A}}_{oo}^{i-1}$ converges to $(\mathbf{I}-\mathbf{\widetilde{A}_{oo}})^{-1} = \mathbf{\Delta}^{-1}_{oo}$. By integrating these insights, the long-term behavior of the iterative solution can be articulated as:
\begin{equation}
    \lim\limits_{k \to \infty} \mathbf{x}_o^{(k)} = \mathbf{\Delta}^{-1}_{oo}\mathbf{\widetilde{A}}_{os}{\mathbf{x}}_s
\end{equation}
Therefore, when the number of iterations $N$ is sufficiently large, the entity features in $\mathbf{x}_o^{(N)}$ approximate $\mathbf{\Delta}^{-1}_{oo}\mathbf{\widetilde{A}}_{os}{\mathbf{x}}_s$.
\end{proof}

\section{Alignment Search}
\label{Alignment Search}
In the testing process, rather than use the popular distance metric of Cross-domain Similarity Local Scaling (CSLS)\cite{lample2018word} to search the alignments in most works, we follow \cite{mao2022effective} and \cite{mao2022lightea} to formalize the entity alignment problem as an assignment problem to enforce the one-to-one alignment constraint. Before giving the mathematical definition, it assumes $|\mathcal{T}_s| = |\mathcal{T}_t|=n^{t}$ to simplify the process. In addition, it uses $SIM\in \mathbb{R}^{n^t\times n^t}$ to represent the cosine similarity matrix and compute it between testing entities in two $\
\mathcal{KG}$s with the entity embeddings.  Thus, we attempt to solve the following optimization problem:
  \begin{equation}
 \text{arg}\,\max\limits_{P\in \mathbb{P}_{n^t}}\left \langle P,SIM \right \rangle
 \end{equation}
 where $\mathbb{P}_{n^t}$ is a set of permutation matrices with shape of $\mathbb{R}^{n^t \times n^t}$. Actually, we can directly obtain the optimal solution $P^{\star}$ by Sinkhorn operation\cite{cuturi2013sinkhorn}.
   \begin{equation}
P^{\star} = \lim \limits_{\tau \rightarrow 0^{+}} \text{Sinkhorn}(\frac{SIM}{\tau})
 \end{equation}
 where the operation of Sinkhorn is as follows:
 \begin{equation}
     Sinkhorn(\mathbf{X}) =  \lim \limits_{k \rightarrow +\infty} S^k(\mathbf{X}), S^k(\mathbf{X})=\mathcal{N}_c(\mathcal{N}_r(S^{k-1}(\mathbf{X})))
 \end{equation}
 where $S^0(\mathbf{X}) = exp(\mathbf{X})$, $\mathcal{N}_r(\mathbf{X})=X\oslash (\mathbf{X}\mathbf{1}_{n^{t}}\mathbf{1}_{n^{t}}^{\top})$ and $\mathcal{N}_c(\mathbf{X})=X\oslash (\mathbf{1}_{n^{t}}\mathbf{1}_{n^{t}}^{\top}\mathbf{X})$ are the row and column-wise normalization operators of a matrix, $\oslash$ denotes the element-wise division, $\mathbf{1}_{n^{t}}$ is a column vector of ones. Though we can only obtain an approximate solution with a small $k$ in practice, we empirically found that the approximate solution is enough to obtain a good alignment performance.
 Considering that the assumption of $n^t=|\mathcal{T}_s|=|\mathcal{T}_t|$ is easily violated, a naive reduction is to pad the similarity matrix with zeros such that its shape becomes $\mathbb{R}^{n^t\times n^t}$ where $n^t=max(n^t_1,n^t_2)$.

\section{Evaluation Metric}
\label{Evaluation Metric}

We utilize cosine similarity to calculate the similarity between two entities and employ H@k and MRR as metrics to evaluate all the methods. H@k describes the fraction of truly aligned target entities that appear in the first $k$ entities of the sorted rank list:
\begin{equation}
    H@k = \frac{1}{|S_t|}\sum_{i=1}^{|S_t|}\mathbb{I}[rank_{i}\le k]
\end{equation}
where $rank_i$ refers to the rank position of the first correct mapping for the $i$-th query entities and $\mathbb{I} = 1$ if $rank_{i}\le k$ and $0$ otherwise. $S_t$ refers to the testing alignment set.
MRR (Mean Reciprocal Ranking) is a statistical measure for evaluating many algorithms that produce a list of possible responses to a sample of queries, ordered by the probability of correctness. In the field of EA, the reciprocal rank of a query entity (i.e., an entity from the source KG) response is the multiplicative inverse of the rank of the first correct alignment entity in the target KG. MRR is the average of the reciprocal ranks of results for a sample of candidate alignment entities:
\begin{equation}
    MRR = \frac{1}{|S_t|}\sum_{i=1}^{|S_t|}\frac{1}{rank_i}
\end{equation}

\end{document}


\title{The Appendix of \textit{Gradient Flow of Energy: A General and Efficient Approach for Entity Alignment Decoding}}
%
\titlerunning{The Appendix of TFP}

\maketitle   

\appendix

\section{Proof of Proposition 1}
\theoremstyle{plain}
\newtheorem{myTh}{\bf Proposition}
\begin{myTh}
\label{existence}
\textbf{(Existence of the solution.)} The matrix $\Delta_{oo}$ is non-singular, allowing the reconstruction of other entity features $\mathbf{x}_o$ using seed alignment entity features $\mathbf{x}_s$ as ${\mathbf{x}}_{o}(t) = -\mathbf{\Delta}^{-1}_{oo}\mathbf{\Delta}_{os}{\mathbf{x}}_s$.
\end{myTh}
\begin{proof}
Initially, consider an undirected connected graph scenario where $\mathbf{\Delta}_{oo}$ is a sub-matrix of the Laplacian matrix $\mathbf{\Delta}$. Given that sub-Laplacian matrices of undirected connected graphs are invertible  \cite{rossi2022unreasonable}, thus $\Delta_{oo}$ is non-singular. The spectral properties of eigenvalues in undirected graphs suggest similar non-singularity for directed graphs \cite{maskey2023fractional}.

For general cases, assume an ordered representation of the adjacency matrix for a disconnected graph as:
\begin{equation}
\label{disconnected}
\mathbf{A} = \text{diag}(\mathbf{A}_1,\dots,\mathbf{A}_r)
\end{equation}
Here, $\mathbf{A}_{i}, i={1,\dots,r}$, represents each connected component. The gradient flow in equation (6) is applicable to each connected component independently for disconnected graphs.
\end{proof}

\section{Proof of Proposition 2}
\begin{myTh}
\textbf{(Approximation of the solution.)} 
With the iterative reconstruction solution as delineated in equation (10), and considering a sufficiently large iteration count $N$, the entity features will approximate the results of feature propagation as follows:
\begin{equation}
    \mathbf{X}^{(N)} \approx 
    \begin{pmatrix}
        \mathbf{x}_{s} \\ \mathbf{\Delta}^{-1}_{oo}\mathbf{\widetilde{A}}_{os}{\mathbf{x}}_s\\
    \end{pmatrix}
\end{equation}
\end{myTh}
\begin{proof}
Commencing with the initial entity features $\mathbf{X}^{(0)}$ generated by EA encoders and applying equation (10), we iterate:
\begin{equation}
\begin{aligned}
    \begin{pmatrix}
        \mathbf{x}^{(k)}_s \\
        \mathbf{x}^{(k)}_{o}\\
    \end{pmatrix}
     & = 
     \begin{pmatrix}
         \mathbf{I} & \mathbf{0}\\
        \mathbf{\widetilde{A}}_{os} & \mathbf{\widetilde{A}}_{oo}\\
     \end{pmatrix} 
    \begin{pmatrix}
        \mathbf{x}^{(k-1)}_s \\
        \mathbf{x}^{(k-1)}_{o}\\
    \end{pmatrix}
    = \begin{pmatrix}
    \mathbf{x}^{(k-1)}_s \\ \mathbf{\widetilde{A}}_{os}\mathbf{x}^{(k-1)}_s + \mathbf{\widetilde{A}}_{oo}\mathbf{x}^{(k-1)}_o \\
\end{pmatrix}
\end{aligned}
\end{equation}
Given the stationary nature of the seed alignment entity features $\mathbf{x}_s$, we have the equation $\mathbf{x}^{(k)}_s = \mathbf{x}^{(k-1)}_s=\mathbf{x}_s$. The focus then shifts to the convergence of $\mathbf{x}_o$::
\begin{equation}
    \mathbf{x}^{(k)}_o = \mathbf{\widetilde{A}}_{os}\mathbf{x}_s + \mathbf{\widetilde{A}}_{oo}\mathbf{x}^{(k-1)}_o
\end{equation}
Expanding and analyzing the limit for the stationary state, we find:
\begin{equation}
\begin{aligned}
    \lim\limits_{k \to \infty} \mathbf{x}_o^{(k)} & = \mathbf{\widetilde{A}}_{os}\mathbf{x}_s + \lim\limits_{k \to \infty}\sum_{i=2}^k\mathbf{\widetilde{A}}_{oo}^{i-1}\mathbf{\widetilde{A}}_{os}\mathbf{x}_s + \lim\limits_{k \to \infty}\mathbf{\widetilde{A}}_{oo}^k\mathbf{x}^{(0)}_o \\
    & = \lim\limits_{k \to \infty}\mathbf{\widetilde{A}}_{oo}^k\mathbf{x}^{(0)}_o + \lim\limits_{k \to \infty}\sum_{i=1}^k\mathbf{\widetilde{A}}_{oo}^{i-1}\mathbf{\widetilde{A}}_{os}\mathbf{x}_s
\end{aligned}
\end{equation}
Spectral graph theory provides critical insights into the properties of the Laplacian matrix $\mathbf{\Delta}$. It establishes that the eigenvalues of $\mathbf{\Delta}$ are confined within the range [0,2). This spectral characteristic has direct implications for the matrix $\mathbf{\widetilde{A}} = \mathbf{I} - \mathbf{\Delta}$, whose eigenvalues are consequently within the range (-1,1]. A pivotal aspect of this discussion, as elucidated in Proposition \ref{existence}, is the non-singularity of $\mathbf{\Delta}_{oo}$. The absence of 0 as an eigenvalue of $\mathbf{\Delta}_{oo}$ implies that $\mathbf{\widetilde{A}}$'s eigenvalues strictly occupy the interval (-1,1), thereby excluding the endpoints. This spectral behavior significantly influences the convergence properties of the iterative process. Specifically, the limit $\lim\limits_{k \to \infty}\mathbf{\widetilde{A}}_{oo}^n\mathbf{x}^{(0)}_o$ approaches 0. Furthermore, the summation $\lim\limits_{k \to \infty}\sum_{i=1}^k\mathbf{\widetilde{A}}_{oo}^{i-1}$ converges to $(\mathbf{I}-\mathbf{\widetilde{A}_{oo}})^{-1} = \mathbf{\Delta}^{-1}_{oo}$. By integrating these insights, the long-term behavior of the iterative solution can be articulated as:
\begin{equation}
    \lim\limits_{k \to \infty} \mathbf{x}_o^{(k)} = \mathbf{\Delta}^{-1}_{oo}\mathbf{\widetilde{A}}_{os}{\mathbf{x}}_s
\end{equation}
Therefore, when the number of iterations $N$ is sufficiently large, the entity features in $\mathbf{x}_o^{(N)}$ approximate $\mathbf{\Delta}^{-1}_{oo}\mathbf{\widetilde{A}}_{os}{\mathbf{x}}_s$.
\end{proof}

\section{Alignment Search}
\label{Alignment Search}
In the testing process, rather than use the popular distance metric of Cross-domain Similarity Local Scaling (CSLS)\cite{lample2018word} to search the alignments in most works, we follow \cite{mao2022effective} and \cite{mao2022lightea} to formalize the entity alignment problem as an assignment problem to enforce the one-to-one alignment constraint. Before giving the mathematical definition, it assumes $|\mathcal{T}_s| = |\mathcal{T}_t|=n^{t}$ to simplify the process. In addition, it uses $SIM\in \mathbb{R}^{n^t\times n^t}$ to represent the cosine similarity matrix and compute it between testing entities in two $\
\mathcal{KG}$s with the entity embeddings.  Thus, we attempt to solve the following optimization problem:
  \begin{equation}
 \text{arg}\,\max\limits_{P\in \mathbb{P}_{n^t}}\left \langle P,SIM \right \rangle
 \end{equation}
 where $\mathbb{P}_{n^t}$ is a set of permutation matrices with shape of $\mathbb{R}^{n^t \times n^t}$. Actually, we can directly obtain the optimal solution $P^{\star}$ by Sinkhorn operation\cite{cuturi2013sinkhorn}.
   \begin{equation}
P^{\star} = \lim \limits_{\tau \rightarrow 0^{+}} \text{Sinkhorn}(\frac{SIM}{\tau})
 \end{equation}
 where the operation of Sinkhorn is as follows:
 \begin{equation}
     Sinkhorn(\mathbf{X}) =  \lim \limits_{k \rightarrow +\infty} S^k(\mathbf{X}), S^k(\mathbf{X})=\mathcal{N}_c(\mathcal{N}_r(S^{k-1}(\mathbf{X})))
 \end{equation}
 where $S^0(\mathbf{X}) = exp(\mathbf{X})$, $\mathcal{N}_r(\mathbf{X})=X\oslash (\mathbf{X}\mathbf{1}_{n^{t}}\mathbf{1}_{n^{t}}^{\top})$ and $\mathcal{N}_c(\mathbf{X})=X\oslash (\mathbf{1}_{n^{t}}\mathbf{1}_{n^{t}}^{\top}\mathbf{X})$ are the row and column-wise normalization operators of a matrix, $\oslash$ denotes the element-wise division, $\mathbf{1}_{n^{t}}$ is a column vector of ones. Though we can only obtain an approximate solution with a small $k$ in practice, we empirically found that the approximate solution is enough to obtain a good alignment performance.
 Considering that the assumption of $n^t=|\mathcal{T}_s|=|\mathcal{T}_t|$ is easily violated, a naive reduction is to pad the similarity matrix with zeros such that its shape becomes $\mathbb{R}^{n^t\times n^t}$ where $n^t=max(n^t_1,n^t_2)$.

\section{Datasets}
\label{appendixdataset}
We list the statistics of all datasets used in the experiments in Table \ref{datasets}.
\begin{table}[]
\small
\caption{Statistics for datasets.}
\label{datasets}
\centering
\resizebox{!}{!}{
\begin{tabular}{cc|ccc}
\toprule
\multicolumn{2}{c|}{Datasets}                & Entity & Relation & Triple \\
\midrule
\multirow{2}{*}{DBP$_{ZH-EN}$}   & \multicolumn{0}{|c|}{Chinese}  & 19388         & 1701            & 70414         \\
                                  & \multicolumn{0}{|c|}{English}  & 19572         & 1323            & 95142         \\ \midrule
\multirow{2}{*}{DBP$_{JA-EN}$}   & \multicolumn{0}{|c|}{Japaense}  & 19814         & 1299            & 77214         \\
                                  & \multicolumn{0}{|c|}{English}   & 19780         & 1153            & 93484         \\ \midrule
\multirow{2}{*}{DBP$_{FR-EN}$}   & \multicolumn{0}{|c|}{French}   & 19661         & 903             & 105998        \\
                                  & \multicolumn{0}{|c|}{English}  & 19993         & 1208            & 115722        \\ \midrule
\multirow{2}{*}{SRPRS$_{FR-EN}$} & \multicolumn{0}{|c|}{French}   & 15000         & 177             & 33532         \\
                                  & \multicolumn{0}{|c|}{English}  & 15000         & 221             & 36508         \\ \midrule
\multirow{2}{*}{SRPRS$_{DE-EN}$} & \multicolumn{0}{|c|}{German}   & 15000         & 120             & 37377         \\
                                  & \multicolumn{0}{|c|}{English}  & 15000         & 222             & 38363         \\ \bottomrule
\end{tabular}}
\end{table}

\section{Evaluation Metric}
\label{Evaluation Metric}

We utilize cosine similarity to calculate the similarity between two entities and employ H@k and MRR as metrics to evaluate all the methods. H@k describes the fraction of truly aligned target entities that appear in the first $k$ entities of the sorted rank list:
\begin{equation}
    H@k = \frac{1}{|S_t|}\sum_{i=1}^{|S_t|}\mathbb{I}[rank_{i}\le k]
\end{equation}
where $rank_i$ refers to the rank position of the first correct mapping for the $i$-th query entities and $\mathbb{I} = 1$ if $rank_{i}\le k$ and $0$ otherwise. $S_t$ refers to the testing alignment set.
MRR (Mean Reciprocal Ranking) is a statistical measure for evaluating many algorithms that produce a list of possible responses to a sample of queries, ordered by the probability of correctness. In the field of EA, the reciprocal rank of a query entity (i.e., an entity from the source KG) response is the multiplicative inverse of the rank of the first correct alignment entity in the target KG. MRR is the average of the reciprocal ranks of results for a sample of candidate alignment entities:
\begin{equation}
    MRR = \frac{1}{|S_t|}\sum_{i=1}^{|S_t|}\frac{1}{rank_i}
\end{equation}

\bibliographystyle{splncs04}
\bibliography{reference.bib}